\DeclareMathAlphabet{\mathcal}{OMS}{cmsy}{m}{n}
\newtheorem{definition}{Definition}
\newtheorem{lemma}{Lemma}
\newtheorem{problem}{Problem}
\numberwithin{equation}{section}
\pgfplotsset{every axis/.append style={font=\scriptsize, line width=0.5pt, tick
  style={line width=0.2pt}}}
\pgfplotsset{compat=1.3}
\tikzset{every mark/.append style={scale=1}}
\tikzset{line/.style=}
\DeclareMathOperator*{\argmax}{arg\,max}
\renewcommand{\epsilon}{\varepsilon}
\renewcommand{\phi}{\varphi}
\renewcommand{\leq}{\leqslant}
\renewcommand{\geq}{\geqslant}
\let\@copyrightspace\relax
\begin{document}


\title{Online Influence Maximization}\subtitle{Extended Version}

\numberofauthors{5}
\author{
\alignauthor
  Siyu Lei\\\
  \affaddr{University of Hong Kong\\
  Pokfulam Road, Hong Kong}\\
  \email{sylei@cs.hku.hk}
\alignauthor
  Silviu Maniu\thanks{Work mainly done while the author was affiliated with
  University of Hong Kong.}\\
  \affaddr{Noah's Ark Lab, Huawei\\
  Science Park, Hong Kong}\\
  \email{silviu.maniu@huawei.com}
\alignauthor
  Luyi Mo\\
  \affaddr{University of Hong Kong\\
  Pokfulam Road, Hong Kong}\\
  \email{lymo@cs.hku.hk}
\and
\alignauthor
  Reynold Cheng\\
  \affaddr{University of Hong Kong\\
  Pokfulam Road, Hong Kong}\\
  \email{ckcheng@cs.hku.hk}
\alignauthor
  Pierre Senellart\\
  \affaddr{\makebox[0pt]{T\'el\'ecom ParisTech; CNRS LTCI}\\
  \& NUS; CNRS IPAL}\\
  \email{pierre@senellart.com}
}

\maketitle


\begin{abstract}


Social networks are commonly used for marketing purposes. 
For example, free samples of a product can be given to a few influential social network users (or ``seed nodes''), 
with the hope that they will convince their friends to buy it. One way to formalize marketers' objective is through \emph{influence
maximization} (or IM), whose goal is to find the best seed nodes to activate under a fixed budget, 
so that the number of people who get influenced in the end is maximized.
Recent solutions to IM rely on the {\it influence probability} that a
user influences another one. However, 
this probability information may be unavailable or incomplete.


In this paper, we study IM in the absence of complete information on influence probability. We call this problem 
\emph{Online Influence Maximization} (OIM) since we learn influence probabilities
at the same time we run influence campaigns. To solve OIM, we propose a
multiple-trial approach, where (1) some seed nodes are selected based
on existing influence information; (2) an influence campaign is started
with these seed nodes; and (3) users' feedback is used to
update influence information. We adopt the Explore--Exploit strategy,
which can select seed nodes using either the current influence probability estimation (exploit), or the confidence bound on the estimation (explore).
Any existing IM algorithm can be used in this framework. We also develop
an incremental algorithm that can significantly reduce the overhead of
handling users' feedback information. Our experiments show that our
solution is more effective than traditional IM methods on the
partial information.


\end{abstract}


\section{Introduction}
\label{sec:intro}

In recent years, there has been a lot of interest about how social network users
can affect or {\it influence} others (via the so-called \emph{word-of-mouth}
effect). This phenomenon has been found to be useful for marketing purposes. For
example, many companies have advertised their products or brands on social
networks by launching influence campaigns, giving free products to a few
influential individuals (seed nodes), with the hope that they can promote the
products to their friends~\cite{marketing}. The objective is to identify a set
of most influential people, in order to attain the best marketing effect. This
problem of \emph{influence maximization} (IM) has attracted a lot of research
interest~\cite{domingos01, domingos02, wei09, wei10, wei10lt}.

Given a {\it promotion budget}, the goal of IM is to select the best seed nodes
from an {\it influence graph}.  An influence graph is essentially a graph with
{\it influence probabilities} among nodes representing social network users. In
the \emph{independent cascade model}, for example, a graph edge $e$ from user
$a$ to $b$ with influence probability $p$ implies that $a$ has a chance $p$ to
affect the behavior of $b$ (e.g., $a$ convinces~$b$ to buy a movie
ticket)~\cite{Kempe:2003}. Given an influence graph, IM aims to find $k$ {\it
seed nodes}, whose expected number of influenced nodes, or \emph{influence
spread}, is maximized. Marketing efforts can then be focused on the $k$ nodes
(or persons). In the IM literature, these seed nodes are said to be
\emph{activated}~\cite{domingos01, domingos02, wei09, wei10, wei10lt}.

\begin{figure}[t!]
  \centering
  \includegraphics[width=\linewidth]{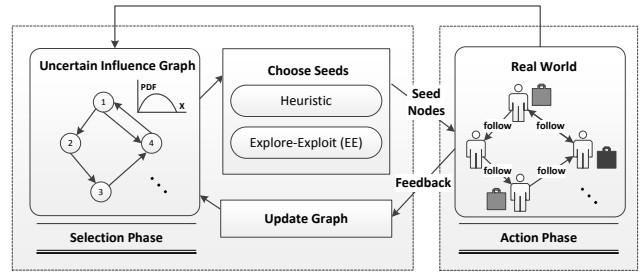}\\
  \caption{The OIM framework.}\label{sf}
  \vspace{-15.5pt}
\end{figure}

While existing IM algorithms effectively obtain the most influential seed nodes,
they assume that the influence probability value between each pair of nodes is
known. However, this assumption may not hold. Consider a marketing firm starting
in a new city with some knowledge of the social network of the users in the
city.  The company, however, does not know how influence propagates among these
users.  Unless the influence probability information is known, the marketing
firm cannot run an IM algorithm and decide the target users. To obtain these
values, {\it action logs}, which record the social network user's past
activities, can be used~\cite{Goyal:2010}. This information may not be readily
available.

Is it possible to perform IM on a social network, even if the information about
influence probabilities is {\it absent} or {\it incomplete}?  We call this
problem {\it Online Influence Maximization} (OIM), as we aim at discovering
influence probabilities at the same time we are performing influence campaigns.
(We say that an IM algorithm is \emph{offline}, if it assumes that the influence
probability between every node pair is known in advance.) In the absence of
complete influence probability information, making the best marketing effort out
of a limited promotion budget can be challenging. To tackle this problem, we
propose a solution based on influencing seed nodes in {\it multiple} rounds. In
each round, we select some seed nodes to activate (e.g., advertising a product
to a few selected users). The feedback of these users is then used to decide the
seed nodes to be activated in the next round. The information about influence
probabilities in the social network is learnt and refined during these
campaigns.

Figure~\ref{sf} illustrates our OIM framework. It contains multiple successive
influence campaigns, or {\it trials}. A trial should fulfill one of two
objectives: (1) to advertise to promising nodes; and (2) to improve the
knowledge about influence probabilities. A trial consists of two phases: {\it
selection} and {\it action}. In the {\it selection phase}, an {\it uncertain
influence graph} is maintained. This graph models the uncertainty of influence
probabilities among social network users, in terms of a probability
distribution. A {\it seed selection strategy}, based on an existing IM solution,
is then executed on this graph to produce up to $k$ seed nodes. In the {\it
action phase}, the selected seed nodes are activated in the \emph{real world}
(e.g., sending the advertisement message to chosen users). The actions of these
users, or {\it feedback} (e.g., whether the message is further spread), is then
used to {\it update} the uncertain influence graph. The iteration goes on, until
the marketing budget is exhausted. In this paper, we focus on the two crucial
components of the selection phase: (1) {\it seed selection strategy}; and (2)
techniques for {\it updating} the uncertain influence graph.

\noindent {\bf 1. Seed selection strategy.} To choose seed nodes in a trial, a simple way is to make use of existing IM algorithms.
Due to the lack of knowledge about influence probabilities, this approach may not be the best. We thus develop an {\it Explore--Exploit strategy} (or {\it EE}), which performs IM based on existing influence probability information:

\vspace{0.05in}

\noindent $\bullet$ {\bf [Exploit]} Select $k$ seed nodes for getting the most rewarding influence spread from the influence graph, derived from the uncertain influence graph. Any state-of-the-art IM algorithms (e.g., CELF~\cite{celf}, DD~\cite{wei09}, TIM and TIM+~\cite{TIM}) can be used; or

\noindent $\bullet$ {\bf [Explore]}  Select $k$ seed nodes based on some
strategy (e.g., through estimating the confidence bound of
the influence probability) to improve the knowledge about the influence graph.

\vspace{0.05in}

In this paper, we study strategies for exploit and explore. With suitable use of strategies,  {\it EE} performs better than running an existing IM algorithm on the uncertain influence graph alone.

In our OIM solution, $N$ trials are carried out. In each trial, an
existing IM algorithm may be executed. If $N$ is large, the performance
of our algorithm can be affected. The problem is aggravated if the
underlying uncertain influence graph is big. For state-of-the-art IM
algorithms (e.g., CELF~\cite{celf} and TIM+~\cite{TIM}), this running
time is dominated by the cost of sampling the influence graph. For
example, in TIM+, the sampling effort costs more than 99\% of the
computation time. We design an efficient solution, based on the intuition
that users' feedback often only affects a small portion of the influence
graph. If samples of the previous iterations are stored, it is possible to reuse them, instead of sampling the influence graph again. We examine conditions allowing a sampled graph to be effectively reused in a new trial.  We propose an \emph{incremental algorithm}, and present related data structures for facilitating efficient evaluation of our solution. This algorithm can support any sample-based IM algorithm running on independent cascade models. We demonstrate how to use TIM+ in this paper.

\noindent {\bf 2. Updating the uncertain influence graph.} As discussed before,
our seed selection strategy is executed on the uncertain influence graph
(Figure~\ref{sf}). It is important that this graph accurately reflects the
current knowledge about the influence among different users, so that the seed
selection strategy can make the best decision. We investigate algorithms for
updating this graph based on the feedback of activated users (e.g., whether they
spread out an advertisement message). We examine two variants, which update the
influence graph {\it locally} and {\it globally}. A local update refreshes the
parameters of the influence probability distribution between two graph nodes,
while a global update is applied to the parameters of the influence probability
information that applies to the whole uncertain influence graph. These
algorithms are developed based on classical machine learning methods (e.g, Least
Squares Estimation and Maximum Likelihood).

Our solutions can be adopted by social marketers who aim to promote their
products, in cases when the underlying probabilities of the influence graph are
unknown. Our approach can utilize any state-of-the-art IM algorithm. We also
examine how to update the uncertain influence graph effectively by machine
learning methods. We develop an incremental algorithm to improve the efficiency
of our solution. Our experiments demonstrate that our proposed methods can
effectively and efficiently maximize influence spread.

\section{Related Work}\label{sec:related}

\textbf{Influence Maximization (IM).}
Kempe et al.~\cite{Kempe:2003} first proposed the study of IM in social
networks. They showed that finding the set of seed nodes that maximizes
influence is NP-hard, and showed that the greedy algorithm has a constant
approximation guarantee. However, this solution is not very fast,  because
thousands of samples are often required, and each sampling operation has a
complexity linear to the graph size.  To improve the efficiency of IM solutions,
several heuristics were developed, namely Degree Discount~\cite{wei09},
PMIA~\cite{wei10}, IPA~\cite{IPA}, and IRIE~\cite{IRIE}. Although these
heuristics are fast, their accuracy is not theoretically guaranteed. Improved
aproximation algorithms with theoretical guarantees include CELF~\cite{celf},
CELF++~\cite{celfplus}, and NewGreedy~\cite{wei09}.  More recently, Borgs et
al.\ proposed an algorithm based on reverse influence sampling, and showed that
it is runtime-optimal with accuracy guarantees~\cite{Borgs}. The scalability of
this solution was enhanced by Tang et al., who developed TIM and TIM+~\cite{TIM}
to further reduce the number of samples needed.

There are also other works that address different variants of the IM
problem: (1) incorporating community~\cite{cbtopkin} and
topic~\cite{INFLEX} information in the propagation process; (2)
competition of different parties for influence~\cite{Lu:2013}; and (3)
use of other influence propagation models such as linear threshold or
credit distribution~\cite{Goyal:2011, Singer:2012, siandposteriore}.

\textbf{Learning influence probabilities.}
Saito et al.~\cite{saito} modeled the problem of obtaining influence
probabilities as an instance of likelihood maximization, and developed an
expectation maximization algorithm to solve it.  Given a social network and an
{\it action log} (e.g., user~$u$ performs action~$a$ at time~$t$), Goyal et
al.~\cite{Goyal:2010} developed static and time-dependent models to compute
influence probabilities between a pair of social network users.  These methods
require the action log information of all the users involved to be known in
advance; however, this information may not be available.  Our framework does not
require all action logs to be available. Instead, we select seed nodes in
multiple advertising campaigns, so that influence maximization can be done
faster. We then use users' feedback in each campaign to learn and refine
influence probabilities.

\textbf{Multi-armed bandits (MAB).} The EE strategy in the seed selection phase of our solution  is inspired
by the $\epsilon$-greedy algorithm, which was originally developed to solve
the \emph{multi-armed bandit problem} (MAB)~\cite{Robbins:1952}.
In the $\epsilon$-greedy algorithm~\cite{rl}, $\epsilon$~controls the trade-off between exploitation and exploration.
Specifically, with probability 1 - $\epsilon$, an action is executed based on the current knowledge (i.e., \emph{exploit}); with
probability $\epsilon$, another action is performed (i.e., \emph{explore}). This
framework is adopted as a baseline in our solution.

\cite{CMAB} studies combinatorial MAB algorithms, and in particular the {\tt
CUCB} algorithm, which uses upper confidence bounds~\cite{UCB} for choosing
between explore and exploit. A scenario akin to the OIM problem is illustrated
and it is shown that {\tt CUCB} achieves a bound on the regret. However, {\tt
CUCB} is not applicable due to two factors. First, the activated nodes are
counted multiple times leading to redundant activations and choices. Second, and
most practically important, the approximation bound depends on an initialization
step in which each arm (in this scenario, seed node) is tested to get an
activation feedback; this is not practically feasible in cases when activation
budgets are limited.  Another algorithm closely related to our framework is
Thompson Sampling~\cite{Agrawal:2012}, where each independent arm is simulated
by a Beta distribution of successes and failures. In our scenario, the
arms are the parameters of the algorithms, and defining success and failure in a
result of an influence maximization is not trivial.

\section{Influence Maximization: Review}
\label{sec:setting}

We now provide a review of the IM problem and its solutions. This forms the
basis of the OIM problem to be studied in this paper.  Table~\ref{tab:notation}
shows the notation used.

\begin{table}[t]
  \small
  \caption{Symbols used in this paper.\label{tab:notation}}
  \vspace{-1em}
  \begin{tabular}{cl}
    \toprule
    {\bf symbol} & {\bf description} \\
    \midrule
    $G$ & influence graph  \\
    $V$ & set of users (nodes) of $G$ \\
    $E$ & set of edges of $G$  \\
    $p_{ij}$ & influence probability from $i$ to $j$ (fixed
  value)  \\

  $P_{ij}$ & influence probability from $i$ to $j$ (random
variable)  \\
    $N$ & number of trials  \\
    $k$ & budget for each trial \\
    $S$ & set of seed nodes  \\
    $\sigma(S)$ & expected influence spread  \\
    $(\alpha,\beta)$ & global prior for the beta distribution  \\
    $A_n$ & set of successfully activated nodes in trial $n$ \\
    $F_n$ & real world activation feedback in trial $n$ \\
$(h_{ij},m_{ij})$ & number of successful and unsuccessful\\&
activations of the edge from $i$ to $j$\\
    \bottomrule
  \end{tabular}
\end{table}

Let $G=(V,E,p)$ be an \emph{influence graph}, where $v\in V$ are \emph{users} or
nodes, and $e\in E$ are the links or edges between them.  Each edge $e=(i,j)$
between users $i$ and $j$ is associated with an \emph{influence probability}
$p_{ij}\in[0,1]$. This value represents the probability that user $j$ is
activated by user $i$ at time $t+1$, given that user $i$ is activated at time
$t$. We also suppose that time flows in discrete, equal steps. In the IM
literature, $p_{ij}$ is given for every $i$ and $j$.  Obtaining $p_{ij}$
requires the use of action logs~\cite{Goyal:2010} which may not be available. In
this paper, we investigate how to perform IM without knowing $p_{ij}$ in
advance.

In the independent cascade model, at a given timestamp $t$, every node is in
either active (influenced) or inactive state, and the state of each node can be
changed from inactive to active, but not vice-versa. When a node~$i$ becomes
active in step $t$, the influence is independently propagated at $t+1$ from node
$i$ to its currently inactive neighbors with probability $p_{ij}$.
Node $i$ is given one chance to activate its inactive neighbor. The
process terminates when no more activations are possible. A node can be
independently activated by any of its (active) incoming neighbors. Suppose that
the activation process started from a set $S$ of nodes. We call the expected
number of activated nodes of $S$ the \emph{expected influence spread},
denoted
$\sigma(S)$.
Formally:
\newcommand{\infl}{\mathop{\mathrm{infl}}}
\begin{definition}
  Given a weighted graph $G=(V,E,p)$, let $\infl$ be the {\it immediate
  influence operator}, which is the random process that extends a set of nodes
  $X\subseteq V$ into a set of immediately influenced nodes $\infl(X)$, as
  follows:
  \[\Pr(v\in\infl(X))=\begin{cases}
      1&\text{if $v\in X$;}\\
      1-\prod_{\substack{(u,v)\in E\\u\in X}} (1-p_{uv})&\text{otherwise.}
  \end{cases}\]
  Given a seed set $S\subseteq V$, we define the
  \emph{set of influenced nodes} $I(S)\subseteq V$ as the random variable
  that is the fixpoint $I^\infty(S)$ of the following inflationary random process:
  \[\left\{  \begin{aligned}
    I^0(S)&=\emptyset;\\
    I^1(S)&=S;\\
    I^{n+2}(S)&=I^{n+1}(S)\cup \infl(I^{n+1}(S) \backslash
        I^n(S))&\text{for $n\geq 0$.}\\
    \end{aligned}\right.\]
  The influence spread $\sigma(S)$ is $\mathbb{E}[|I(S)|]$.
\end{definition}

Based on the above definition, \cite{Kempe:2003} defines the \emph{influence
maximization problem} (IM) as follows.
\begin{problem}
\label{problem:offline}
  Given a weighted graph $G=(V,E,p)$ and a number $1\leq k\leq |V|$, the
  \emph{influence maximization} (IM) problem finds a set $S\subseteq V$
  such that $\sigma(S)$ is maximal subject to $|S|=k$.
\end{problem}
As discussed in~\cite{Kempe:2003}, evaluating the influence spread is difficult.
Even when the spread values are known, obtaining an exact solution for the IM
problem is computationally intractable. Next we outline the existing IM
algorithms for this problem.

{\bf IM algorithms.} A typical IM algorithm evaluates the \emph{score} of a node
based on some metric, and inserts the $k$ \emph{best} nodes, which have the
highest scores, into $S$. For example, the degree discount (DD)
heuristic~\cite{wei09} selects the nodes with highest degree as $S$.  Another
classical example is \emph{greedy}: at each step, the \emph{next best node}, or
the one that provides the largest marginal increase for $\sigma$, is inserted
into $S$. This is repeated until $|S|=k$. The \emph{greedy} algorithm provides
an $(1-1/{e})$-approximate solution for the IM problem.  To compute the
influence spread efficiently,  \emph{sampling-based} algorithms with theoretical
guarantees were developed. For example, CELF~\cite{celf} evaluates the expected
spread of nodes with the seed nodes, and select the nodes with the largest
marginal spread; TIM~\cite{TIM} counts the frequencies of the nodes appearing in
the reversed reachable sets, and chooses the nodes with the highest frequencies;
TIM+~\cite{TIM} is an extension of TIM for large influence graphs.

We say that the above IM algorithms are \emph{offline}, since they are
executed on the influence graph once, assuming knowledge of $p_{ij}$ for every
$i$ and $j$. If these values are not known, these algorithms cannot be executed.
This problem can be addressed by \emph{online} IM algorithms, as we will discuss next.

\section{Maximizing Influence Online}
\label{sec:ugraph}

The goal of the {\it online influence maximization} (or OIM) is to perform IM
without knowing influence probabilities in advance. Given a number $N$ of
advertising campaigns (or {\it trials}), and an advertising budget of $k$ units
per trial, we would like to select up to $k$ seed nodes in each trial. These
chosen nodes are then advertised or {\it activated}, and their feedback is used
to decide the seed nodes in the next trial. Let us formulate the OIM problem
below.
\begin{problem}
\label{problem:online}
  Given a weighted graph $G=(V,E,p)$ with unknown probabilities $p_{uv}$, and a
  budget consisting of $N$ trials with $1\leq k\leq |V|$ activated nodes per trial, the
  \emph{online influence maximization} (OIM) problem is to find for each $1\leq n\leq
  N$ a set $S_n$ of nodes, with $|S_n|\leq k$, such that
  $\mathbb{E}\left[\left|\bigcup_{1\leq n\leq N} I(S_n)\right|\right]$ is
  maximal.
\end{problem}

Note that the IM problem, discussed in Section~\ref{sec:setting}, is a special
case of the OIM problem (by setting $N=1$). Since solving the IM problem is
computationally difficult, finding a solution for the OIM is also challenging.
We propose a solution that consists of multiple {\it trials}. In each trial, a
{\it selection} (for choosing appropriate seed nodes) and an {\it action}  (for
activating the seed nodes chosen) is performed (Figure~\ref{sf}). The seed
selection makes use of one of the offline IM algorithms discussed in
Section~\ref{sec:setting}.\footnote{In this paper we assume that the advertising
budget $k$ is fixed for each trial.}

We next present the {\it uncertain influence graph}, which captures the
uncertainty of influence probabilities (Section~\ref{subsec:ugraph}). We then
discuss our solution based on this graph in Section~\ref{subsec:framework}.

\subsection{The Uncertain Influence Graph}
\label{subsec:ugraph}

We assume that a social network, which describes the relationships among social
network users, is given.  However, the exact influence probability on each edge
is not known. We model this by using the {\it uncertain influence graph}, in
which the influence probabilities of each edges are captured by probability
density functions, or \emph{pdf} (Figure~\ref{sf}).  The pdf can be refined
based on the feedback returned from a trial. Since influence activations are
binary random variable, we capture the uncertainty over the influence as a
\emph{Beta distribution}. Specifically, the random variable of the influence
probability from node $i$ to node $j$, $P_{ij}$ is modeled as a Beta
distribution having probability density function:
\[
  f_{P_{ij}}(x)=
\frac{x^{\alpha_{ij}-1}(1-x)^{\beta_{ij}-1}}{B(\alpha_{ij},\beta_{ij})},
\]
where $\mathrm{B}(\alpha_{ij},\beta_{ij})$ is the Beta function, acting as a
normalization constant to ensure that the total probability mass is $1$, and
$\alpha_{ij}$ and $\beta_{ij}$ are the distribution parameters. For the Beta
distribution, $\mathbb{E}[P_{ij}] = \frac{\alpha_{ij}}{\alpha_{ij}+\beta_{ij}}$
and $\sigma^2[P_{ij}] = \frac{\alpha_{ij}
\beta_{ij}}{(\alpha_{ij}+\beta_{ij})^{2}(\alpha_{ij}+\beta_{ij}+1)}$.  An
advantage of using the Beta distribution is that it is a conjugate prior for
Bernoulli distributions, or more generally, binomial distributions. This allows
us to compute the posterior distributions easily when new evidence is provided.
Section~\ref{sec:update} explains this in more detail.

At the time of the first trial, we assume no prior information
about the influence graph, except global $\alpha$ and~$\beta$ parameters, shared
by all edges, i.e., $P_{ij}\sim \mathrm{B}(\alpha, \beta)\ \forall (i,j)\in E$.  These
global $\alpha$ and $\beta$ parameters represent our global \emph{prior belief}
of the uncertain influence graph. In the absence of any better prior, we
can set $\alpha=\beta=1$, with $B(1,1)$ being the uniform distribution.

Our model can be extended to handle various prior
information about the influence graph.  For example, if we have individual prior
knowledge ($\alpha_{ij}, \beta_{ij}$) about an edge, we can set $P_{ij}$ as
$P_{ij}\sim \mathrm{B}(\alpha_{ij}, \beta_{ij})$. When we have access
to only the mean and variance of the influence of an edge, we can derive
$\alpha_{ij}$ and $\beta_{ij}$ from the formulas of $\mathbb{E}[P_{ij}]$ and
$\sigma^2[P_{ij}]$ given above.  For the situation in which some action logs
involving the social network users are available, algorithms for learning the influence
probabilities from these logs~\cite{Goyal:2010, Goyal:2011} can be first applied, and
the estimated influence probabilities can then be used as prior
knowledge for the graph.

\subsection{The OIM Framework}
\label{subsec:framework}

\begin{algorithm}
\caption{{\tt Framework(G, k, N)}}
\label{fm}
{\footnotesize
\begin{algorithmic}[1]
    \State{\bf Input:} \# trials $N$, budget $k$, uncertain influence graph $G$
    \State{\bf Output:} seed nodes $S_n (n=1\dots N)$, activation results $A$

    \State{$A\gets\emptyset$}
    \For{$n=1$ \textbf{to} $N$}
        \State $S_n \gets \mathtt{Choose}(G,k)$ \label{ln:Choose}
        \State $(A_n, F_n)\gets \mathtt{RealWorld}(S_n)$ \label{ln:RW}
        \State $A\gets A\cup A_n$ \label{ln:UpdateFeedback}
        \State $\mathtt{Update}(G,F_n)$ \label{ln:UpdateModel}
	\EndFor
\State \Return $\{S_n | {n=1\dots N}\}$, $A$
\end{algorithmic}}
\end{algorithm}

Algorithm~\ref{fm} depicts the solution framework of the OIM problem.  In this
algorithm, $N$ trials are executed. Each trial involves selecting seed nodes,
activating them, and consolidating feedback from them. In each trial $n$ (where
$n=1,\ldots,N$), the following operations are performed on the uncertain influence
graph $G$:

\begin{compactenum}
  \item {\tt Choose} (Line~\ref{ln:Choose}): A seed set $S_n$ is chosen from $G$, by using an offline IM algorithm, and strategies for
    handling the uncertainty of $G$ (Section~\ref{sec:choose}).
  \item {\tt RealWorld} (Lines~\ref{ln:RW}--\ref{ln:UpdateFeedback}): The selected seeds set is tested in the real world (e.g.,
    sending advertisement messages to selected users in the social network). The
    feedback information from these users is then obtained. This is a tuple
    $(A_n,F_n)$ comprised of: \begin{inparaenum}[(i)] \item the
      set of activated nodes $A_n$, and \item the set of edge activation
        attempts $F_n,$ which is a list of edges having either a
        successful or an unsuccessful activation.  \end{inparaenum}
   \item {\tt Update} (Line~\ref{ln:UpdateModel}): We refresh $G$ based on $(A_n,F_n)$ (Section~\ref{sec:update}).
\end{compactenum}

One could also choose not to update $G$, and instead only run an offline IM
based on the prior knowledge. Our experimental results show that the influence
spread under our OIM framework with proper updates is better than the one
without any update. Next, we investigate the design and implementation of {\tt
Choose} (Section~\ref{sec:choose}) and {\tt Update} (Section~\ref{sec:update}).


\section{Choosing Seeds}
\label{sec:choose}

We now study two approaches for selecting $k$ seed nodes in the {\tt Choose}
function of Algorithm~\ref{fm}: heuristic-based
(Section~\ref{subsec:heuristics}) and explore-exploit strategies
(Section~\ref{subsec:ee}).

\subsection{Heuristic-Based Strategies}
\label{subsec:heuristics}

We first discuss two simple ways for choosing seeds from the uncertain influence graph $G$.

{\bf 1. \texttt{Random}.} This heuristic, which arbitrarily selects $k$ seed nodes, is based on the \emph{fairness} principle, where every user has the same chance to be activated.

{\bf 2. \texttt{MaxDegree}.} Given a node $p$ in $G$, we define the {\it out-degree} of~$p$ to be the number of outgoing edges of $p$ with non-zero influence probabilities. The {\it out-degree} of $p$ can mean the number of friends of the social network user represented by $p$, or their number of followers. Intuitively, if $p$ has a higher out-degree, it has a higher chance of influencing other users.  The \texttt{MaxDegree} heuristic simply chooses the nodes with $k$ highest out-degree values.

The main advantage of these two heuristics is that they are easy to implement.
However, they do not make use of influence probability information effectively.
In a social network, some users might be more influential than others. It may
thus be better to target users with higher influence probabilities on their
outgoing edges. The above heuristics also do not consider the feedback
information received from the activated users, which can be useful to obtain the
true values of the influence probabilities. We will examine a better
seed-selection method next.

\subsection{Explore-Exploit Strategies}
\label{subsec:ee}

The {\it Explore-Exploit} (EE) strategy chooses seed nodes based on influence
probabilities. Its main idea is to {\it exploit}, or execute an offline IM
algorithm, based on the influence information currently available. Since this
information may be uncertain, the seed nodes suggested by {\it exploit} may
not be the best ones. We alleviate this problem by using {\it explore}
operations, in order to improve the knowledge about influence probabilities.
Solutions for effectively controlling explore and exploit operations have been
studied in the {\it multi-armed bandit} (MAB) literature~\cite{Robbins:1952,rl}.
These MAB solutions inspire our development of the two seed-selection
strategies, namely {\it $\epsilon$-greedy} and {\it Confidence-Bound (CB)}.
Next, we present these two solutions in detail.

\paragraph*{\bf 1. $\epsilon$-greedy}

In this strategy (Algorithm~\ref{alg:ee}), a parameter $\epsilon$ is used
to control when to explore and when to exploit.  Specifically, with probability $1-\epsilon$, exploitation is carried out; otherwise, exploration is performed.

\begin{algorithm}
\caption{{\tt $\epsilon$-greedy$(G, k)$}\label{alg:ee}}
\begin{algorithmic}[1]
    \State{\bf Input:} uncertain influence graph $G=(V, E, P)$, budget $k$
    \State{\bf Output:} seed nodes $S$ with $|S|=k$

    \State sample $z$ from $\mathit{Bernoulli}(\epsilon)$
    \If {$z = 0$}
    	 $S \gets \mathtt{Explore}(G, k)$
    \Else
    	 $ \space \textit{  } S \gets \mathtt{Exploit}(G, k)$
    \EndIf		

    \State \Return $S$
\end{algorithmic}
\end{algorithm}

In {\tt Exploit}, we execute an offline IM algorithm, given the graph information we have obtained so far.
Recall that we model the influence probability $p_{ij}$ between nodes $i$
and $j$ as a probability distribution $P_{ij}$. We use the mean of $P_{ij}$ to represent $p_{ij}$, i.e.,
\begin{equation}
p_{ij} =  \mathbb{E}[P_{ij}] = \frac{\alpha_{ij}}{\alpha_{ij} + \beta_{ij}}. \nonumber
\end{equation}
A graph with the same node structure but with the $p_{ij}$ values on edges
constitutes an influence graph $G'$, on which the offline IM algorithm is executed.
Notice that when $\epsilon=0$, the solution reduces to exploit-only, i.e.,
the IM algorithm is run on $G'$ only.

The main problem of {\tt Exploit} is that estimating $p_{ij}$
by $\mathbb{E}[P_{ij}]$ can be erroneous. For example, when
$P_{ij}$ is a highly uncertain Beta distribution (e.g., the uniform distribution,
$\mathrm{B}(1,1)$), any value in $[0,1]$ can be the real influence probability. Let us consider a node~$i$ that has, in reality, a high
influence probability $p_{ij}$ on another node~$j$. Due to the large variance in $P_{ij}$, its value is underestimated. This reduces the chance that {\tt Exploit} chooses node $i$ to activate; consequently, the seed nodes selected may not be the best. The {\tt Explore} routine is designed to alleviate this problem. Rather than equating $p_{ij}$ to $\mathbb{E}[P_{ij}]$, $p_{ij}$ is over-estimated by using $P_{ij}$'s standard deviation, or $\sigma_{ij}$:
\begin{eqnarray}
p_{ij}&=&
\mathbb{E}[P_{ij}] + \sigma_{ij} \nonumber \\
&=&\frac{1}{\alpha_{ij}+\beta_{ij}} \left(\alpha_{ij} +
\sqrt{\frac{\alpha_{ij}\beta_{ij}}{\alpha_{ij} + \beta_{ij} + 1}}\right).
\nonumber
\end{eqnarray}
Then an offline IM algorithm on these new values of
$p_{ij}$ is performed.  A node $i$ that has a small chance to be chosen may now have a
higher probability to be selected.  Our experiments show that the use of {\tt Explore} is especially useful during the first few trials of the OIM solution, since the influence probability values during that time may not be very accurate.  From the feedback of activated users, we can learn more about the influence probabilities of the edges of $i$. We will discuss this in detail in Section~\ref{sec:update}.

This $\epsilon$-greedy algorithm has two problems. First, it is difficult to set
an appropriate $\epsilon$, which may have a large impact on its effectiveness.
Second, increasing $p_{ij}$ by $\sigma_{ij}$  may not always be good.
Based on these observations, we next propose an improved version of $\epsilon$-greedy.

\paragraph*{\bf 2. Confidence-Bound (\texttt{CB})}
The main idea of this strategy is to use a real-valued parameter $\theta$ to control the value of $p_{ij}$:
\begin{equation}
\label{eqn:cb}
    p_{ij}=\mathbb{E}[P_{ij}] + \theta \sigma_{ij}. 
\end{equation}
As shown in Algorithm~\ref{alg:cb}, for every edge $e$ from node $i$ to $j$, we compute its mean $\mu_{ij}$, variance $\sigma_{ij}$, and influence probability $p_{ij}$ based on $\theta$ (Lines 3-6). An offline IM algorithm is then run on $G'$, the influence graph with the probabilities computed by Equation~\ref{eqn:cb} (Lines 7-8). The set $S$ of seed nodes is then returned (Line 9).

\begin{algorithm}
\caption{{\tt CB($G, k$)}\label{alg:cb}}
\begin{algorithmic}[1]
    \State{\bf Input:} uncertain influence graph $G=(V, E, P)$, budget $k$
    \State{\bf Output:} seed nodes $S$ with $|S| = k$

    \For{$e \in E$}
       \State $\mu_{ij} \gets \frac{\alpha_{ij}}{\alpha_{ij} + \beta_{ij}}$
       \State $\sigma_{ij} \gets \frac{1}{(\alpha_{ij} + \beta_{ij})} \cdot \sqrt{\frac{\alpha_{ij}\beta_{ij}}{(\alpha_{ij} + \beta_{ij} + 1)}}$
	   \State $p_{ij} \gets \mu_{ij} + \theta \sigma_{ij}$
	\EndFor

    \State $G'\gets G$, with edge probabilities $p_{ij}, \forall(i,j)\in E$
	  \State $ S \gets \mathtt{IM}(G', k)$
	
    \State \Return $S$
\end{algorithmic}
\end{algorithm}

{\bf Setting $\theta$.}  The key issue of Algorithm~\ref{alg:cb} is how to
determine the value of $\theta$, so that the best $S$ can be found. Observe that
when $\theta=0$, $p_{ij}$ becomes $\mu_{ij}$ or $\mathbb{E}[P_{ij}]$, and {\tt
CB} reduces to {\tt Exploit} of the $\epsilon$-greedy algorithm. On the other
hand, when $\theta=1$,  $p_{ij}$ becomes $\mathbb{E}[P_{ij}]+\sigma_{ij}$, and
{\tt CB} is essentially {\tt Explore}. Thus, $\epsilon$-greedy is a special case
of {\tt CB}. However, {\tt CB} does not restrict the value of $\theta$ to zero
or one. Thus, {\tt CB} is more flexible and general than $\epsilon$-greedy.

In general, when $\theta>0$ is used, it means that {\tt CB} considers the
influence probabilities given by $\mu_{ij}$'s to be under-estimated, and it
attempts to improve the activation effect by using larger values of $p_{ij}$.
On the contrary, if $\theta < 0$, the influence probabilities are considered to
be over-estimated, and {\tt CB} reduces their values accordingly. As we will
discuss in Section~\ref{subsec:theta}, $\theta$ can be automatically adjusted
based on the feedback returned by activated users. This is better than
$\epsilon$-greedy, where the value of $\epsilon$ is hard to set. Note that we
choose to use a global $\theta$ instead of a local one on each edge, to reduce
the number of parameters to be optimized and to improve efficiency.


\section{Managing User Feedback}\label{sec:update}

Recall from Algorithm~\ref{fm} that after the seed nodes $S$ are obtained from
{\tt Choose} (Line 5), they are activated in the real world. We then collect
{\it feedback} from the users represented by these nodes (Lines~6--7). The
feedback describes which users are influenced, and whether each activation is
successful. For instances of such feedback traces, take for example Twitter and
other micro-bloggin platforms. In these, the system can track actions such as
likes and retweets which are reasonable indicators of influence propagation.  We
now explain how to use the feedback information to perform {\tt Update} (Line
8), which refreshes the values of influence probabilities and $\theta$ used in
the \texttt{CB} algorithm.

\begin{figure}[t!]
  \centering
  \includegraphics[width= .85 \linewidth]{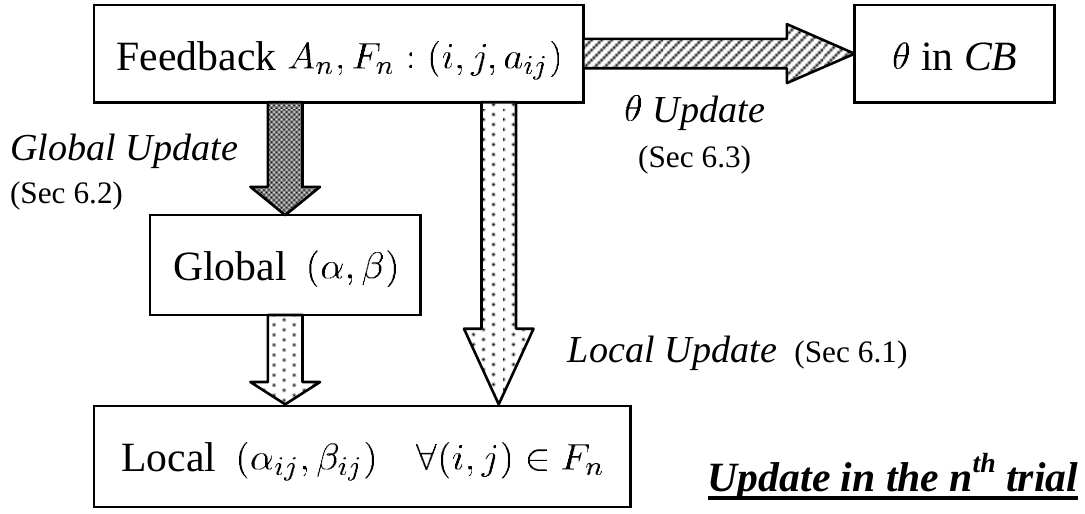}\\
  \caption{Updating the influence graph and $\theta$ with user feedback.}\label{fig:update}
\end{figure}

Given a trial $n$ in Algorithm~\ref{fm}, let $A_n$ be the set of activated nodes in that trial, and $F_n$ be the set of activation results. Specifically, $F_n$ contains tuples in the form of $(i,j, a_{ij})$, where $i$ and $j$ are users between which an activation was attempted; $a_{ij}=1$ if the
influence was successful, and $a_{ij}=0$ otherwise.  Note that $(i,j)$ is an edge of the influence graph $G$. Also, $F_n$ might not contain all edges of $G$, since an activation might not reach every user in $G$.

Three kinds of updates can be performed based on $A_n$ and  $F_n$:
\\{\bf 1. Local (Section~\ref{subsec:local}):} Update the influence probability's distribution (i.e., $\mathrm{B}(\alpha_{ij}, \beta_{ij})$) if the edge (i.e., activation from $i$ to $j$) was attempted;
\\{\bf 2. Global (Section~\ref{subsec:global}):} Update the global prior information $\alpha$ and~$\beta$, which are shared by all edges of $G$; and
\\{\bf 3. $\theta$ (Section~\ref{subsec:theta}):} Update the value of $\theta$ used in \texttt{CB}, if it is used as a seed selection strategy in {\tt Choose}.

Figure~\ref{fig:update} illustrates these three kinds of updates in the $n$-th trial. In the next sections, we discuss how to conduct these updates in detail. We remark that these update methods do not affect {\tt Random} and {\tt MaxDegree}, since they do not use these updated values.

\subsection{Updating Local $\vec{\alpha}$ and $\vec{\beta}$}
\label{subsec:local}

As we mentioned before, the influence probability between any two adjacent nodes
$i$ and $j$ is modeled as a Beta distribution with parameters $\alpha_{ij}$ and
$\beta_{ij}$, denoted as $P_{ij}\sim\mathrm{B}(\alpha_{ij},\beta_{ij})$.
Since the Beta distribution is a conjugate prior for the Bernoulli distribution,
then, given feedback $(i, j, a_{ij})$ in $F_n$ (seen as a Bernoulli trial), we
can update the distribution as follows:

{\bf 1.} If $a_{ij}=1$, i.e., the activation from node $i$ to node $j$
  was successful: $P_{ij}\sim \mathrm{B}(\alpha_{ij} + 1,\beta_{ij})$;

{\bf 2.} If $a_{ij}=0$, i.e., the activation from node $i$ to node $j$ failed:
  $P_{ij}\sim \mathrm{B}(\alpha_{ij},\beta_{ij} + 1)$.

In the beginning, we have no prior information about the
distribution except the global $\alpha$ and $\beta$, i.e., $\alpha_{ij}=\alpha$
and $\beta_{ij}=\beta$.  After $n$ trials and activations, we have thus
collected $n$ pieces of feedback information.  Let $h_{ij}$ ($m_{ij}$) be the number
of successful  (failed) activations for edge $(i,j)$. We have
\begin{equation}
    \alpha_{ij} = \alpha + h_{ij},\quad \beta_{ij} = \beta + m_{ij}. \nonumber
\end{equation}
Hence, this local update is equivalent to maintaining a distribution
$\mathrm{B}(\alpha+h_{ij},\beta+m_{ij})$, i.e., the distributions on the
edges simply count the number of successful and failed activations passing
through that edge, smoothed by the prior $\mathrm{B}(\alpha,\beta)$.

Note that this update process corresponds exactly to the MLE approach taken by~\cite{Goyal:2010} to learn influence probabilities
from action logs, with a smoothing prior added. The important difference is
that~\cite{Goyal:2010} only conducts this estimation for edges where there is
evidence, i.e., local updates. If the evidence is sparse, this can lead to
a sub-optimal, and overfitting, influence graph. Global update of Beta
priors, which go beyond the local feedback, can yield a better influence graph.

\subsection{Updating Global $\vec{\alpha}$ and $\vec{\beta}$}
\label{subsec:global}

Local updates to the random variable $P_{ij}$ allows the edge influence probability distribution to be updated directly.
In the first few trials, however, the real influence spread is sparse and limited,
and most of the edges will not be reached by an activation.
Therefore, the influence of choosing a good prior will weigh heavily on how {\tt
Choose} performs.  Once some evidence is gathered, this prior can be refined by
taking into account the feedback in a \emph{global} sense, over all trials up to
the current one. Next, we present two methods of updating the global
$\alpha$ and $\beta$ priors based on the feedback.

\paragraph*{Least Squares Estimation} The first solution is to find the best
fit for the $\alpha$ and $\beta$ priors according to the real spread that we obtained from the real world test at each trial.

Let us first explain the reasoning when there is one seed node (i.e., $|S_n|=1$), and we fix $\alpha=1$.
Let $\mathcal{A}_n$ be the set of successful activated nodes before the $n$-th trial (i.e., $\mathcal{A}_n = \cup_{l=1}^{n - 1}A_l$), and $\sigma_n(\{i\})$ be the expected number of additional activated nodes (or expected additional spread) from the seed node $i$ in the $n$-th trial.
For $S_n = \{s\}$, $\sigma_n(\{s\})$ is:
\begin{equation}
{\small
    \sigma_n(\{s\}) = 1 + \sum_{\substack{(s,i)\in E\\ i\not\in \mathcal{A}_n}}{p_{si}\times\sigma_n(\{i\})}
    + \sum_{\substack{(s,i)\in E\\ i\in \mathcal{A}_n}}{p_{si}\times(\sigma_n(\{i\}) - 1)}, \nonumber
}
\end{equation}
which is the sum of the outgoing spreads weighted by the outgoing
probabilities $p_{si}$ and discounted by 1 for
nodes already activated along an outgoing edge.

We estimate $\sigma_n(\{s\})$ by $|A_n|$ from the feedback obtained by the
influence campaign. We also estimate $p_{si}=\frac{\alpha+h_{si}}{\alpha +
  h_{si} + \beta + m_{si}}$, i.e., the mean of $\mathrm{B}(\alpha_{ij},
  \beta_{ij})$. Note that $h_{si}+m_{si}$ is the total number of attempts from
  node $s$ to $i$, which is the same for neighbors of $s$ because every
  activation through $s$ tries to activate all outgoing nodes in the independent
  cascade model. Thus, we use $t_s$ to denote $h_{si}+m_{si} \ \forall
  (s,i)\in E$.  By further estimating $\sigma_n(\{i\})$ by an overall estimation
  $\hat{\sigma}_n$ and set $\alpha=1$, we obtain
{
\begin{equation}
|A_n| = 1 + \frac{1}{\beta + t_s + 1}\left(\sum_{(s,i)\in E}{(h_{si}+1)\hat{\sigma}_n} - \sum_{(s,i)\in E, i\in\mathcal{A}_n}{(h_{si}+1)}\right).
\nonumber
\end{equation}}

Let $o_s$ be the outgoing degree of $s$, $a_s$ be the number of (previously)
activated neighbors of $s$ (i.e., $a_s=|\{i|(s,i)\in E \wedge
  i\in\mathcal{A}_n\}|$), $h_s$ be the number of total successful activations
  (or hits) on outgoing edges of $s$, and $h_{as}$ be the number of total hits
  on edges leading to activated neighbors. The above equation is simplified to
\begin{equation}
(|A_n| - 1)\beta = (1-|A_n|)(t_s+1) +(h_s+o_s)\hat{\sigma}_n - (h_{as}+a_s). \nonumber
\end{equation}

We then rewrite it as the form of $x_n \beta = y_n$. Since this equation also
applies to activations in all trials up to the current one, we use the least
square estimator for linear regression without an intercept term to get an
estimator for
$\beta$, $\hat{\beta}$:
\begin{equation}
    \hat{\beta} = (\vec{x} \cdot \vec{y})\ /\ (\vec{x} \cdot \vec{x}), \nonumber
\end{equation}
where $\vec{x}$ and $\vec{y}$ are the vectors of values $x_n$ and $y_n$.
The same principles apply when estimating
$\alpha$ and $\beta$ simultaneously, and we omit the details here.

We estimate $\hat{\sigma}_n$ by the average spread of the node from the activation
campaigns, i.e., $\hat{\sigma}_n = \sum_{l=1}^{n}|A_n| /
\sum_{l=1}^{n}|S_n|$.  Note that, when $\hat{\sigma}_n=0$, the equation for
$|A_n|$ is exactly the degree discount estimator from the IM literature~\cite{wei09},
and represents a lower bound on the spread from a node.

A further complication occurs when $|S_n|>1$, which might result in an equation at
least quadratic in $\beta$, due to the influence probability equations of nodes
which are neighbors of more than one seed node. In this work, we simplify the
estimation by assuming full independence among seed nodes, and hence replacing
$x_n$ and $y_n$ by the sum over all $s\in S_n$.

We remark that the estimator above suffers from the reliance on the spread
estimation $\hat{\sigma}_n$. However, it is a good option when we cannot access the full activation feedback $F_n$, but instead, do have
the access to the set of successful activated nodes in each trial (i.e.,
the set~$A_n$). This may happen in an alternate problem setting
when one cannot get all the feedback information from the activated users in~$A_n$.

\paragraph*{Maximum Likelihood Estimation} Given the feedback from each trial
$n$, we can compute the likelihood of the feedback $F_n$ given the probabilities
of each edge in the feedback tuples, by assuming they are activated
independently. 
The likelihood depends on the successful activations (hits) and failed activations (misses)
of each edges and the global prior parameters $\alpha$ and
$\beta$:
\begin{eqnarray}
 \mathcal{L}(F_n)&=&\prod_{(i,j,a_{ij})\in F_n} p_{ij}^{a_{ij}}(1-p_{ij})^{1-a_{ij}}, \nonumber\\
 \mathcal{L}(F_n\mid\alpha,\beta)&=&\prod_{(i,j,a_{ij})\in F_n}
 \frac{(\alpha+h_{ij})^{a_{ij}}(\beta+m_{ij})^{1-a_{ij}}}{\alpha+\beta+h_{ij}+m_{ij}}. \nonumber
\end{eqnarray}

We need to find the parameters $\alpha$ and $\beta$ which maximize the
likelihood:
\begin{equation}
\argmax_{\alpha,\beta} \mathcal{L}(F_n\mid\alpha,\beta).  \nonumber
\end{equation}

To simplify calculations we take the maximum of the log likelihood:
\begin{align*}
  \log\mathcal{L}(F_n\mid\alpha,\beta)&=\sum_{(i,j,a_{ij})\in
  F_n}a_{ij}\log(\alpha+h_{ij})\\&\quad+\sum_{(i,j,a_{ij})\in
  F_n}(1-a_{ij})\log(\beta+m_{ij})\\&\quad-\sum_{(i,j,a_{ij})\in
  F_n}\log(\alpha+\beta+h_{ij}+m_{ij}).
\end{align*}

The optimal values are obtained by solving the equations $\frac{\partial
  \log\mathcal{L}(F_n\mid\alpha,\beta)}{\partial \alpha}=0$ and $\frac{\partial
  \log\mathcal{L}(F_n\mid\alpha,\beta)}{\partial \beta}=0$ for $\alpha$ and
  $\beta$, respectively, which can be simplified as
%
%
\begin{equation}
\sum_{(i,j,a_{ij})\in F_n,a_{ij}=1} \frac{1}{\alpha+h_{ij}} =
\sum_{(i,j,a_{ij})\in F_n,a_{ij}=0}\frac{1}{\beta+m_{ij}}
\nonumber
\end{equation}

This equation can be solved numerically by setting $\alpha$ and solving $\beta$.
In practice, we can fix $\alpha=1$, and let $f(\beta)$ be
\begin{equation}
    f(\beta) = \sum_{(i,j,a_{ij})\in F_n,a_{ij}=0}\frac{1}{\beta+m_{ij}} - \sum_{(i,j,a_{ij})\in F_n,a_{ij}=1} \frac{1}{\alpha+h_{ij}}
    \nonumber
\end{equation}
Since $f(\beta)$ is a monotonically decreasing function of $\beta$ ($f'(\beta) \leq 0\ \forall\beta\geq 1$), we can solve $f(\beta) = 0$ by a binary search algorithm with an error bound $\eta$ (e.g., $10^{-6}$). And thus, the global $\alpha$ and $\beta$ priors are updated accordingly.
%
%
%
%


\subsection{Updating $\vec{\theta} $}\label{sec:eg}
\label{subsec:theta}

We now explain how to dynamically update the value of $\theta$ used in the \texttt{CB} strategy
(Section~\ref{subsec:ee}).

Let $\vec{\theta}=\{\theta_1,\theta_2,\dots,\theta_q\}$ be the $q$ possible values of $\theta$.
We also let $\vec{\phi}=\{\phi_1,\phi_2,\dots,\phi_q\}$, where $\phi_j$ is the probability of using $\theta_j$ in \texttt{CB}.  Initially, $\phi_j = 1/q$ for $j=1,\ldots,q$, and its value is updated based on the gain obtained in each trial. The gain is defined as $G_n=|A_n| / |V|$,
where $|A_n|$ is the real influence spread observed in each round. We then determine $\vec{\theta}$ by using the exponentiated gradient algorithm~\cite{Cesa}. The rationale of using this solution is that if the value of $\theta_j$ used in this trial results in a high gain, the
corresponding $\phi_j$ will be increased by the algorithm, making $\theta_j$ more likely to be chosen in the next trial. Algorithm~\ref{alg:eg} gives the details.

\begin{algorithm}
  \caption{{\tt
  ExponentiatedGradient}($\vec{\phi},\delta,G_n,j,\mathbf{w}$)\label{alg:eg}}

{
\begin{algorithmic}[1]
  \State{\bf Input:} $\vec{\phi}$, probability distribution;
$\delta$, accuracy parameter; $G_n$, the gain obtained;  $j$, the index
of latest used $\theta_j$; $\mathbf{w}$, a vector of weights; $N$, the number of trials.
  \State{\bf Output:} $\theta$
  \State $\gamma \gets \sqrt{\frac{\ln(q/\delta)}{q N}}$, $\tau \gets \frac{4  q
  \gamma}{3 + \gamma}$, $\lambda \gets \frac{\tau}{2 q}$
  \For{$i=1$ \textbf{to} $q$}
    \State $w_i \gets w_i \times \exp\left(\lambda \times \frac{G_n\times
\mathbb I[i=j]
    + \gamma}{\phi_i}\right)$
  \EndFor
  \For{$i=1$ \textbf{to} $q$}
    \State $\phi_i \gets (1 - \tau) \times \frac{w_i}{\sum_{j=1}^{k} w_j} + \tau
    \times \frac{1}{q}$
	\EndFor
  \State{\bf return} sample from $\vec{\theta}$ according to $\vec{\phi}$ distribution
\end{algorithmic}
}
\end{algorithm}

Here, $\gamma$ and $\lambda$ are smoothing factors used to update weights,
and $\mathbb I[z]$ is the indicator function. We compute
$\vec{\phi}$ by normalizing vector $\mathbf{w}$ with regularization factor
$\tau$. All the values in $\mathbf{w}$ are initialized with the value of 1.

In~\cite{Cesa}, it is shown that, for a choice of constant $\theta$'s, {\tt
ExponentiatedGradient} can provide a regret bound on the optimal sequence of
chosen $\theta$ in the vector. In our case, the experimental results also show
that {\tt ExponentiatedGradient} is the best performing strategy.


\begin{figure*}[t!]
  \centering
  \includegraphics[width= \linewidth]{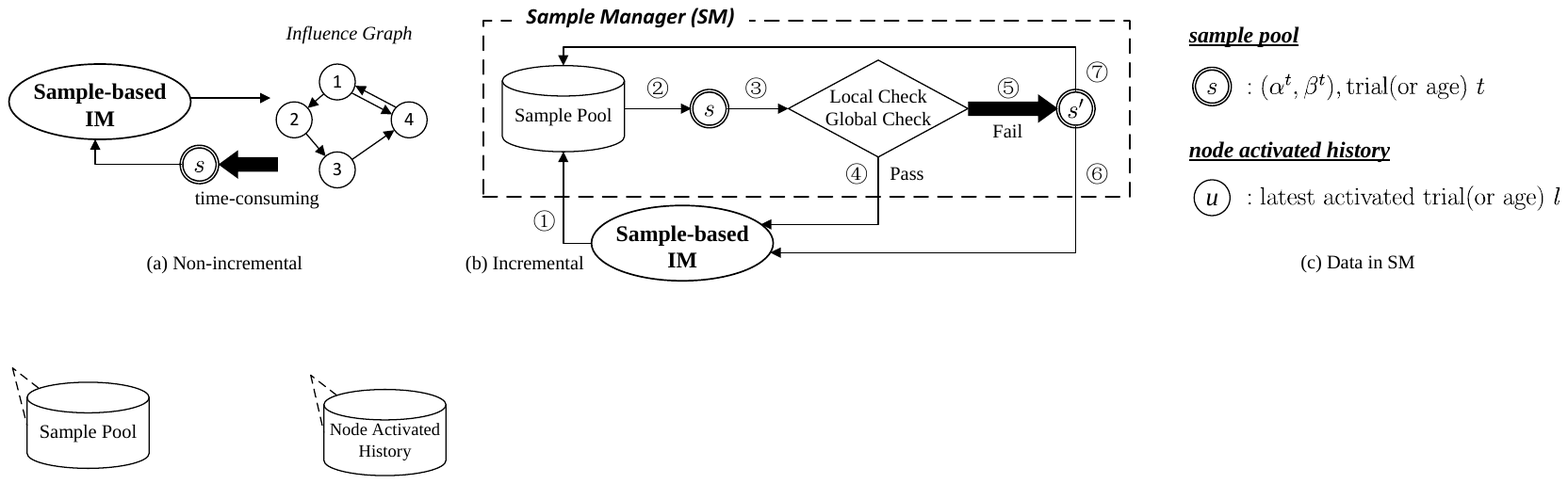}\\
  \caption{Sample-based IM algorithms.}\label{fig:inc}
  \vspace{-10pt}
\end{figure*}

\section{Incremental Solution for OIM}
\label{sec:inc}

In our OIM framework, an IM algorithm is invoked once in every trial to select
seeds.  However, the state-of-the-art IM algorithms with good theoretical
approximation bounds, such as CELF, TIM, and TIM+, are generally costly to run,
especially for large graphs with high influence probabilities. For instance, in
our experiments in the DBLP dataset\footnote{{Detailed description of the
dataset is given in Section~\ref{sec:experiments}.}}, which has around
2{,}000{,}000 edges, the best known algorithm (TIM+) also takes around
half an hour to select the nodes for a trial. Since every run of OIM
takes multiple trials, the running time can be too high in practical terms. To
alleviate this issue, we explore in this section the possibility to increase
the scalability of the OIM framework, by re-using computations between
trials.


The first observation is that all the IM algorithms with theoretical approximation bounds are sample based, and
follow the general sampling process illustrated in Figure~\ref{fig:inc}(a).
Every time an algorithm requires a sample, it samples the influence graph based
on the edge influence probabilities and stores it in a sample, say $s$.
Moreover, their running time is dominated by the cost of sampling the influence graph
(the thick arrow in Figure~\ref{fig:inc}(a)).  For example, more than 99\% of
the computation of TIM+ is spent in sampling the random reverse reachable sets
in the influence graph~\cite{TIM}.

Secondly, the size of the real-world feedback $F_n$ is relatively small compared
with the number of edges in a graph. For instance, in DBLP with $k=1$ and
using TIM+, the average $|F_n|$ is less than $1\%$ of the total number of edges
in the graph. This makes intuitive sense.
Since samples are generated based on the influence graph, and the real-world feedback only influences a small part of
the graph, it would only affect few samples taken from the updated influence graph in the next trial.
This motivates us to explore methods which can save the computational
effort, especially the effort in sampling,
by reusing samples of previous trials,
without incurring much error. 

\subsection{Solution Framework}
\label{subsec:inc-framework}

To explain our approach, we introduce a sample manager (SM) which is responsible
for the sampling procedure for the sample-based IM algorithms.  Generally
speaking, when the IM algorithm requires a sample of the influence graph, it
sends the request to SM, which will then return a sample to it.  To enable an
incremental approach that reuses the computational effort, SM stores the samples
from the previous iterations in a sample pool. In the new trial, it attempts to
reuse the stored samples, if possible, instead of sampling the influence graph
again.

The principle of SM is illustrated in Figure~\ref{fig:inc}(b).  In a new trial,
when the sample-based IM algorithm requires a sample, it sends requests to SM
({\bf Step 1}). SM then randomly selects a sample $s$, which has not been used in this
trial, from the sample pool ({\bf Step 2}). After that, SM conducts two checks, called
local check and global check, on $s$, whose purpose is to determine whether $s$
is allowed to be reused after local and global updates performed in previous
rounds ({\bf Step 3}). If $s$ passes these two checks, SM simply returns the sample to
the IM algorithm ({\bf Step 4}); otherwise, SM generates a new sample $s'$ based on
the current influence graph ({\bf Step 5}), and returns it to the IM algorithm ({\bf
Step 6}) as well as replaces $s$ by $s'$ in the sample pool ({\bf Step 7}).

In the above framework, assuming that conducting the local and global checks is
much more efficient than sampling the influence graph and the ratio of reused
samples is high, SM has the potential to significantly reduce the running time
of the IM algorithm in the OIM framework.

Next, we demonstrate how this principle can be applied in practice on the TIM+
algorithm.
Please note that the same principle can be easily applied to develop the incremental approaches
for other sample-based IM algorithms.

\subsection{Case Study: TIM+}
\label{subsec:tim}

In this section, we demonstrate the case that TIM+ is executed when an IM algorithm is called in OIM framework.
For example, in {\tt Explore}, TIM+ is run with the input influence graph obtained by taking the mean of the random variable as the influence probability of the edge, i.e., $p_{ij} = \frac{\alpha_{ij}}{\alpha_{ij} + \beta_{ij}}$.
We next demonstrate how to develop the incremental approach for TIM+ in {\tt
Exploit} with SM. The principle also applies for {\tt Explore} as well as {\tt
CB}. We focus on {\tt Exploit} here and omit details for others.

Briefly speaking, TIM+ generates a set of {\it random reverse reachable sets}
(or random RR sets) on the influence graph, and estimates the expected spread of
nodes, based on the generated random RR sets.  Here, an RR set for node $v \in
V$, denoted by $R_v$, is a set of nodes which are: (1) generated on an instance
of a randomly sampled influence graph $g$ (an edge exists with a probability
equal to its influence probability), and (2) able to reach $v$ in the sampled
graph $g$.  In other words, $\forall i \in R_v$, there exists a path from $i$ to
$v$ in $g$.  A random RR set is then an RR set where $v$ is selected uniformly
at random from $V$. We omit the formal definition of
random RR sets as well as their generation and refer interested readers
to~\cite{TIM} for details.

Let $E(R_v)$ be the set of all incoming edges for nodes in $R_v$, i.e., $E(R_v) = \{(i, j)|(i, j)\in E \wedge j\in R_v\}$. The next lemma is the foundation of the incremental approach for TIM+.

\begin{lemma}
\label{lemma:inc}
Given node $v\in V$, the occurrence probability of an RR set ($R_v$) keeps unchanged if the influence probabilities for edges in $E(R_v)$ do not change.
\end{lemma}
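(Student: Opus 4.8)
The plan is to characterize the occurrence probability of a fixed RR set $R_v$ as a function only of the influence probabilities of edges in $E(R_v)$, from which the claim follows immediately. Recall that a random RR set is generated by: (1) picking $v$ uniformly from $V$; (2) sampling a random subgraph $g$ of $G$ where each edge $(i,j)$ is included independently with probability $p_{ij}$; and (3) taking $R_v$ to be the set of nodes that can reach $v$ in $g$. Here $v$ is fixed, so we only need to analyze steps (2)--(3). The key structural fact is that whether a node $i$ belongs to $R_v$ depends only on the \emph{live} edges among those that lie on paths into $v$; and the event ``$R_v$ equals a particular node set $X$'' is a property of the restriction of $g$ to a specific set of edges.

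\textbf{Main steps.} First I would show that $R_v = X$ (for a given candidate set $X \ni v$) is determined entirely by the status (live/blocked) of the edges in $E(R_v)$ --- more precisely, of the edges incident \emph{into} $X$. The reasoning: $R_v$ is computed by a backward reachability search from $v$; this search only ever examines incoming edges of nodes already known to reach $v$, i.e., edges in $E(R_v)$ when $R_v = X$. Concretely, $i \in R_v$ iff there is a path $i = u_0 \to u_1 \to \cdots \to u_\ell = v$ with every edge $(u_t, u_{t+1})$ live; every edge on such a path has its head $u_{t+1}$ reaching $v$, hence lies in $E(X)$. So the indicator of the event $\{R_v = X\}$ is a deterministic function of the live/blocked status of edges in $E(X)$ alone. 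Second, since edges are sampled independently, $\Pr(R_v = X) = \sum_{c} \prod_{(i,j)\in E(X)} q_{ij}(c_{ij})$ where the sum ranges over the edge-configurations $c$ on $E(X)$ consistent with $R_v = X$ and $q_{ij}(1) = p_{ij}$, $q_{ij}(0) = 1-p_{ij}$; crucially the product involves only $p_{ij}$ for $(i,j) \in E(X)$. Third, taking into account the uniform choice of $v$ just multiplies by $1/|V|$, a constant. Hence the occurrence probability of $R_v$ depends only on $\{p_{ij} : (i,j) \in E(R_v)\}$, so it is unchanged if those probabilities are unchanged --- which is exactly the statement.

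\textbf{The subtle point.} The one thing that needs care is the direction of the dependence: it must be edges \emph{into} $R_v$, not edges out of it. A node $j \in R_v$ may well have outgoing edges in $g$ leading to nodes not in $R_v$ (those nodes don't reach $v$), and the status of such edges is irrelevant; conversely an edge $(i,j)$ with $j \in R_v$ matters because it can bring $i$ into $R_v$. The definition $E(R_v) = \{(i,j) \mid (i,j)\in E \wedge j \in R_v\}$ is precisely the set of incoming edges of $R_v$'s nodes, so it captures exactly the relevant edges. I would also note the mild circularity --- $E(R_v)$ is defined in terms of $R_v$ itself --- and resolve it by arguing for each fixed candidate value $X$ of $R_v$: the event $\{R_v = X\}$ is measurable w.r.t.\ the status of $E(X)$, and on that event $E(R_v) = E(X)$, so summing over $X$ gives a probability expressible through $\{p_{ij} : (i,j) \in \bigcup_X E(X)\}$, but term-by-term each contributes only via $E(X) = E(R_v)$. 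This bookkeeping is the main (minor) obstacle; the rest is a routine consequence of edge independence in the independent cascade sampling.
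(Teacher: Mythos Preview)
Your proposal is correct and follows essentially the same approach as the paper: both argue that the event $\{R_v = X\}$ is determined by the live/blocked status of edges in $E(X)$ (via the two conditions that every $i\in X$ has a live edge into $X$ leading eventually to $v$, and every $i\notin X$ has all its edges into $X$ blocked), and then invoke edge independence to conclude $\Pr(R_v)$ is a function only of $\{p_{ij}:(i,j)\in E(R_v)\}$. Your version is in fact more careful than the paper's terse argument, explicitly handling the direction of edges and the apparent circularity of $E(R_v)$ depending on $R_v$, but the core idea is identical.
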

\begin{proof}
Let $\xi_{ij}$ be a random variable for the existence of edge $(i,j)\in E$. We have $\xi_{ij}=1$ with probability of $p_{ij}$, and $\xi_{ij}=0$ with probability of $1-p_{ij}$. $\Pr(R_v)$ is the probability that the following two events happen:
(i) $\forall i \in R_v$, there exists a path from $i$ to $v$, i.e., $\exists j\in R_v$ s.t. $\xi_{ij}=1$; and
(ii) $\forall i \not\in R_v$, there exists no path from $i$ to $v$, i.e., $\forall j\in R_v$, $\xi_{ij}=0$.
And therefore, $\Pr(R_v)$ is some function of $p_{ij}$ where $(i,j)\in E(R_v)$. Hence, if $p_{ij}$ $(\forall (i,j)\in E(R_v))$ is unchanged, $\Pr(R_v)$ keeps unchanged, too.
\end{proof}

Let us consider SM introduced in Section~\ref{subsec:inc-framework}. The samples
stored in SM for TIM+ are the random RR sets described above. After each
round, local and global graph updates may be performed according to the real
world feedback.
Suppose the current trial is $n$, for a randomly selected $s$ (or $R_v$) from the sample pool, we have to conduct local and global checks for it.
Lemma~\ref{lemma:inc} gives an intuition on how these checks can be performed for these checks. $\Pr(R_v)$ remains the same (or only deviates a bit) if the updates have no effect (or only some minor effects) on the influence probabilities for edges in $E(R_v)$.

Before we detail the local and global checks,
let us first define the age of a sample $s$ and the age of a node $u$.
The age of $s$ is the trial when $s$ was sampled, and the age of a node $u$ is
the latest trial when the real world test attempted to activate $u$ (regardless
of the activation's success).

{\bf Local check.}
Let $R_v$'s age to be $t$, and $E_{local}$ be the set of edges that exist in the feedbacks from the $t$-th trial to the $(n-1)$-th trial, i.e., $E_{local} = \{(i,j)|\exists q (t\leq q\leq n - 1) \text{ s.t. } (i,j) \in F_q\}$.
Local updates only affect edges that are included in the real world feedback, and so, $E_{local} \cap E(R_v) = \emptyset$ indicates that influence probabilities for edges in $E(R_v)$ in the $n$-th trial are the same as the ones in the $t$-th trial. Hence, $R_v$ is not affected by local updates. In other words,
\[
   \left( E_{local} \cap E(R_v) = \emptyset \right)
   \Rightarrow
   \left( \text{$R_v$ passes local check} \right).
\]

We use the sample and node ages for an efficient local check as follows.
\begin{lemma}[Local Check]
    \label{lemma:local}
    If for all $u \in R_v$, $u$'s age is smaller than $R_v$'s age, we have $E_{local} \cap E(R_v) = \emptyset$.
\end{lemma}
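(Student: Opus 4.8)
The plan is to show the contrapositive-friendly direction directly: assuming every node $u \in R_v$ has age strictly smaller than $R_v$'s age $t$, I will argue that no edge of $E(R_v)$ can appear in any feedback set $F_q$ for $t \leq q \leq n-1$, which is exactly the assertion $E_{local} \cap E(R_v) = \emptyset$. The key observation connecting the two notions of ``age'' is the bookkeeping fact established in Section~\ref{subsec:tim}: whenever the real-world test in trial $q$ records an edge $(i,j)$ in $F_q$ — whether the activation succeeded or failed — it attempted to activate the endpoint $j$. Hence, by the definition of node age, if $(i,j) \in F_q$ then $j$'s age is at least $q$.

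First I would unfold the definition of $E(R_v)$: by construction $E(R_v) = \{(i,j) \mid (i,j) \in E \wedge j \in R_v\}$, so every edge in $E(R_v)$ has its head $j$ inside $R_v$. Now suppose for contradiction that some $(i,j) \in E_{local} \cap E(R_v)$. Being in $E_{local}$ means $(i,j) \in F_q$ for some $q$ with $t \leq q \leq n-1$; by the observation above, the head $j$ then has age at least $q \geq t$. But being in $E(R_v)$ means $j \in R_v$, and the hypothesis says every node of $R_v$ has age strictly less than $t = R_v$'s age. This gives $t \leq \text{age}(j) < t$, a contradiction. Therefore $E_{local} \cap E(R_v) = \emptyset$, as claimed.

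The argument is essentially a one-line pigeonhole once the right invariant is named, so there is no heavy calculation. The only genuine subtlety — the step I would be most careful to state explicitly — is the link ``$(i,j)\in F_q \Rightarrow$ $j$'s age $\geq q$,'' which relies on the precise reading that a feedback tuple for edge $(i,j)$ is generated only when an activation of $j$ through $i$ was attempted (and that age records the \emph{latest} such trial, so it can only be pushed up, never reset). I would make sure the node-age definition given just before the lemma is invoked verbatim here, since the whole proof hinges on the strictness of the inequality ``$u$'s age $<$ $R_v$'s age'' being incompatible with ``$u$'s age $\geq t$''. Everything else — that $E(R_v)$ is determined by heads lying in $R_v$, and that $E_{local}$ ranges exactly over trials $t,\dots,n-1$ — is immediate from the definitions.
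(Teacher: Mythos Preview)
Your proof is correct and follows essentially the same route as the paper's: both hinge on the observation that an edge $(i,j)$ appearing in some $F_q$ with $t\leq q\leq n-1$ forces the head $j$'s age to be at least $t$, which contradicts the hypothesis for $j\in R_v$. The paper states this implication directly (``$(l<t)\Rightarrow(\forall(i,u)\in E,\,(i,u,a_{iu})\notin F_q)$'') and concludes immediately, whereas you phrase it as a contradiction argument and spell out the role of $E(R_v)$'s definition more explicitly, but the substance is identical.
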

\begin{proof}
Recall that $u$'s age, denoted $l$, is the latest trial that the real world test tried to activate it. We have,
\\$(l < t) \Rightarrow (\forall (i,u)\in E, (i,u,a_{iu})\not\in F_q (t\leq q\leq n-1))$. Lemma~\ref{lemma:local} is then a direct consequence by considering the definition of $E_{local}$.
\end{proof}

According to Lemma~\ref{lemma:local}, we store the sample as well as its age in the sample pool, and we also store the node's age in a node activated history (refer to Figure~\ref{fig:inc}(c)). Then, the time complexity to do local check is $O(|R_v|)$ as the age's information can be accessed in constant time.

{\bf Global check.}
After global update is performed, the global $\alpha$ and $\beta$ priors may be
changed. Since they are shared by all edges, changes on global priors lead to
changes on all edges' influence probabilities. However, we observe that they
will converge as we get more activation feedback from the real world. Intuitively, if the influence probabilities for edges in $E(R_v)$ only deviate a bit, there is only minor effect on the random RR sets. Note that, only samples which pass local check will be then evaluated by the global check. And so, if the global priors when the sample $s$ (or $R_v$) was generated are close to the current global priors, the influence probabilities for edges in $E(R_v)$ do not change much.

Let $\alpha^t$ and $\beta^t$ be the priors at trial $t$, and the current priors are $\alpha$ and $\beta$. We use a threshold $\tau$ to measure whether two priors are close, moreover, whether global check is passed.
\[
   \left( \left| \frac{\alpha^t}{\alpha^t+\beta^t} - \frac{\alpha}{\alpha+\beta} \right| < \tau \right)
   \Rightarrow
   \left( \text{$R_v$ passes global check} \right).
\]

Hence, in SM, we also store the priors when the sample was generated in the sample pool (Figure~\ref{fig:inc}(c)). And therefore, the global check is conducted in constant time $O(1)$.


{\bf Discussions.} The total time complexity of conducting local and global
checks on a sample $R_v$ is $O(|R_v|)$. As mentioned in~\cite{TIM}, the
complexity of generating a sample $R_v$ is of the order of the total in-degree for nodes in $R_v$, i.e., $O(|E(R_v)|)$. Let $d$ be the average in-degree for a node, we have $|E(R_v)| = d\times|R_v|$ on average. This indicates that conducting checks for a sample is about $d$ times faster than generating a new sample. Hence, the incremental approach for TIM+ with SM can significantly save computation effort if the ratio of re-used samples is high.

Note that if {\tt CB} is employed, $\theta$ may also be updated according to the real-world feedback. We design a similar mechanism with global check, called $\theta$ check, to verify whether $R_v$ is allowed to be re-use. Let $\theta^t$ be the $\theta$ when $R_v$ was generated and $\sigma^t$ be the standard deviation for global prior. We have
\[
   \left( \left| \theta^t \sigma^t - \theta \sigma \right| < \tau \right)
   \Rightarrow
   \left( \text{$R_v$ passes $\theta$ check} \right).
\]
In the next section, we show our experimental results to verify our OIM framework.


\newcommand{\plreal}[7]{\addplot[black, mark=#6, each nth point = 4, mark size = 3] table[x=t,y=#5] {data/#1.#2.#3.#4.dat}}

\newcommand{\plrealx}[7]{\addplot[black, mark=#6,#7, each nth point = 4, mark size = 3] table[x=t,y=#5] {data/#1.#2.#3.#4.dat}}
\newcommand{\plrealy}[7]{\addplot[black, mark=#6, mark size = 3] table[x=t,y=#3] {exp/f/#1_#2_#4.dat}}
\newcommand{\plrealz}[7]{\addplot[black, mark=#6, mark size = 1] table[x=t,y=#5] {#1/#1_#2_K#3.dat}}

\newcommand{\plee}[7]{\addplot[black, mark=#6, each nth point = 5, mark size = 3] table[x=t,y=#5] {#1/#1_#3_K#2#4.dat}}
\newcommand{\plkt}[7]{\addplot[black, mark=#6, each nth point = 5, mark size = 3] table[x=t,y=#5] {exp/#1_#2_K#3T#4.dat}}

\newcommand{\plotopt}[3]{
  \begin{subfigure}{\linewidth}
    \centering
    \begin{tikzpicture}[scale=0.44]
    \tikzstyle{every node}=[font=\Large]
    \pgfsetplotmarksize{1.2pt}
    \begin{axis}[%
        enlargelimits=0,
        ymin = 700,
        ymax = 1200,
        axis on top,
        xlabel={k (\textsc{#1}, Budget = 50)},
        ylabel={Influence Spread},
        legend columns = 3,
        legend entries = {\texttt{MaxDegree}, \texttt{CB}, \texttt{CB-INC}},
        legend to name = legopt,
        cycle list name = {color}
      ]
      \plrealy{#1}{MaxD}{}{B}{re}{square}{solid};
      \plrealy{#1}{Online}{}{B}{re}{*}{solid};
      \plrealy{#1}{OnlineINC}{}{B}{re}{o}{solid};
    \end{axis}
  \end{tikzpicture}
 \end{subfigure}
}

\newcommand{\plottau}[4]{
  \begin{subfigure}{\linewidth}
    \centering
    \begin{tikzpicture}[scale=0.44]
    \tikzstyle{every node}=[font=\Large]
    \pgfsetplotmarksize{1.2pt}
    \begin{axis}[%
	xmin = 0, xmax = 0.04,
        ymin = 500, ymax = 1600,
        axis on top,
        xlabel={Tau, (\textsc{#1}, Budget = 50)},
        ylabel={#3}
      ]
      \plrealy{#1}{Tau}{re}{#4}{#2}{+}{solid};
    \end{axis}
  \end{tikzpicture}
 \end{subfigure}
}

\newcommand{\plotKT}[3]{
  \begin{subfigure}{0.3\linewidth}
    \centering
    \begin{tikzpicture}[scale=0.44]
    \tikzstyle{every node}=[font=\Large]
    \pgfsetplotmarksize{1.2pt}
    \begin{axis}[%
        enlargelimits=0,
        axis on top,
        xlabel={Trial (\textsc{#1}, k = #2)},
        ylabel={Influence Spread},
        legend columns = 5,
        legend entries = {\texttt{Real}, \texttt{Random}, \texttt{MaxDegree}, \texttt{CB}, \texttt{CB-INC}},
        legend to name = legkt,
        cycle list name = {color}
      ]
      \plkt{#1}{Real}{#2}{#3}{re}{diamond}{solid};
      \plkt{#1}{Random}{#2}{#3}{re}{x}{solid};
      \plkt{#1}{MaxD}{#2}{#3}{re}{square}{solid};
      \plkt{#1}{Online}{#2}{#3}{re}{*}{solid};
      \plkt{#1}{OnlineINC}{#2}{#3}{re}{o}{solid};
    \end{axis}
  \end{tikzpicture}
 \end{subfigure}
}

\newcommand{\plotKTsmall}[3]{
    \begin{tikzpicture}[scale=0.44]
    \tikzstyle{every node}=[font=\Large]
    \pgfsetplotmarksize{1.2pt}
    \begin{axis}[%
        enlargelimits=0,
        axis on top,
        xlabel={Trial (\textsc{#1}, k = #2)},
        ylabel={Influence Spread},
        legend columns = 5,
        legend entries = {\texttt{Real}, \texttt{Random}, \texttt{MaxDegree}, \texttt{CB}, \texttt{CB-INC}},
        legend to name = legkt,
        cycle list name = {color}
      ]
      \plkt{#1}{Real}{#2}{#3}{re}{diamond}{solid};
      \plkt{#1}{Random}{#2}{#3}{re}{x}{solid};
      \plkt{#1}{MaxD}{#2}{#3}{re}{square}{solid};
      \plkt{#1}{Online}{#2}{#3}{re}{*}{solid};
      \plkt{#1}{OnlineINC}{#2}{#3}{re}{o}{solid};
    \end{axis}
  \end{tikzpicture}
}

\newcommand{\plotee}[3]{
  \begin{subfigure}{0.45\linewidth}
    \centering
    \begin{tikzpicture}[scale=0.44]
    \tikzstyle{every node}=[font=\Large]
    \pgfsetplotmarksize{1.2pt}
    \begin{axis}[%
        xmin=0, xmax=50,
        enlargelimits=0,
        axis on top,
        xlabel={Trial (\textsc{#1}, k = #2)},
        ylabel={Influence Spread},
        legend columns = 3,
        legend entries = {\texttt{Exploit}, \texttt{$\epsilon$-greedy},
                          \texttt{CB}},
        legend to name = legee,
        cycle list name = {color}
      ]

      \plkt{#1}{MLE}{#2}{#3}{re}{square}{solid};
      \plkt{#1}{EE}{#2}{#3}{re}{triangle}{solid};
      \plkt{#1}{Online}{#2}{#3}{re}{*}{solid};

    \end{axis}
  \end{tikzpicture}
 \end{subfigure}
}

\newcommand{\plotupdates}[3]{
    \begin{tikzpicture}[scale=0.44]
    \tikzstyle{every node}=[font=\Large]
    \pgfsetplotmarksize{1.2pt}
    \begin{axis}[%
        xmin=0, xmax=50,
        enlargelimits=0,
        axis on top,
        xlabel={Trial (\textsc{#1}, k = #2)},
        ylabel={Influence Spread},
        legend columns = 4,
        legend entries = {\texttt{CB+MLE}, \texttt{CB+LSE}, \texttt{CB+LOC}, \texttt{CB+NO}},
        legend to name = legupdates,
        cycle list name = {color}
      ]

      \plkt{#1}{Online}{#2}{#3}{re}{*}{solid};
      \plkt{#1}{LSE}{#2}{#3}{re}{o}{solid};
      \plkt{#1}{LOC}{#2}{#3}{re}{square}{solid};
      \plkt{#1}{NO}{#2}{#3}{re}{+}{solid};

    \end{axis}
  \end{tikzpicture}
}

\newcommand{\plotpriors}[2]{
    \begin{tikzpicture}[scale=0.44]
    \tikzstyle{every node}=[font=\Large]
    \pgfsetplotmarksize{1.2pt}
    \begin{axis}[%
	xmin=0, xmax=220,
	ymin=500, ymax=1200,
        enlargelimits=0,
        axis on top,
        xlabel={Beta (\textsc{#1}, k = 1)},
        ylabel={Influence Spread},
        legend columns = 3,
        legend entries = {\texttt{CB+MLE}, \texttt{CB+LSE}, \texttt{CB+LOC}, \texttt{CB+No Update}},
        legend to name = legprios,
        cycle list name = {color}
      ]
      \plrealy{#1}{MLE}{re}{K#2}{}{*}{solid};
      \plrealy{#1}{LSE}{re}{K#2}{}{o}{solid};
      \plrealy{#1}{LOC}{re}{K#2}{}{square}{solid};
      \plrealy{#1}{NO}{re}{K#2}{}{+}{solid};

    \end{axis}
  \end{tikzpicture}
}

\newcommand{\plott}[3]{
  \begin{tikzpicture}[scale=0.44]
  \tikzstyle{every node}=[font=\Large]
    	\begin{axis}[%
        xmin=0, xmax=50,
        enlargelimits=0,
        axis on top,
        xlabel={Trial (\textsc{#1}, k = #2)},
        ylabel={Running Time (in seconds)},
        legend columns = 4,
        legend entries = {\texttt{Random}, \texttt{MaxDegree}, \texttt{CB},\texttt{CB-INC}},
        legend to name = legtt,
        cycle list name = {color}
      ]
      \plkt{#1}{Random}{#2}{#3}{te}{+}{solid};
      \plkt{#1}{MaxD}{#2}{#3}{te}{square}{solid};
      \plkt{#1}{Online}{#2}{#3}{te}{*}{solid};
      \plkt{#1}{OnlineINC}{#2}{#3}{te}{o}{solid};

      \end{axis}
 \end{tikzpicture}
}

\newcommand{\plottt}[1]{
  \begin{tikzpicture}[scale=0.44]
  \tikzstyle{every node}=[font=\Large]
    	\begin{axis}[%
        enlargelimits=0,
        axis on top,
	xmin = 0,
        xlabel={k (\textsc{#1}, Budget = 50)},
        ylabel={Running Time (in seconds)},
        legend columns = 4,
        legend entries = {\texttt{Random}, \texttt{MaxDegree}, \texttt{CB},\texttt{CB-INC}},
        legend to name = legtt,
        cycle list name = {color}
      ]
      \plrealy{#1}{Random}{}{T}{re}{+}{solid};
      \plrealy{#1}{MaxD}{}{T}{re}{square}{solid};
      \plrealy{#1}{CB}{}{T}{re}{*}{solid};
      \plrealy{#1}{CBI}{}{T}{re}{o}{solid};

      \end{axis}
 \end{tikzpicture}
}

\newcommand{\plottsmall}[3]{
    \begin{tikzpicture}[scale=0.44]
    \tikzstyle{every node}=[font=\Large]
    	\begin{axis}[%
        xmin=0, xmax=50,
        enlargelimits=0,
        axis on top,
        xlabel={Trial (\textsc{#1}, k = #2)},
        ylabel={Running Time (in seconds)},
        legend columns = 4,
        legend entries = {\texttt{Random},\texttt{MaxDegree}, \texttt{CB},\texttt{CB-INC}},
        legend to name = legtt,
        cycle list name = {color}
      ]
      \plkt{#1}{Random}{#2}{#3}{te}{+}{solid};
      \plkt{#1}{MaxD}{#2}{#3}{te}{square}{solid};
      \plkt{#1}{Online}{#2}{#3}{te}{*}{solid};
      \plkt{#1}{OnlineINC}{#2}{#3}{te}{o}{solid};

      \end{axis}
 \end{tikzpicture}
}

\newcommand{\plotexploit}[2]{
  \begin{subfigure}{0.3\linewidth}
    \centering
    \begin{tikzpicture}[scale=0.6]
    \tikzstyle{every node}=[font=\Large]
    \pgfsetplotmarksize{1.2pt}
    \begin{axis}[%
        xmin=1, xmax=50,
        enlargelimits=0,
        axis on top,
        xlabel={Trial -- \textsc{#1} k=#2 Beta(1, 19)},
        ylabel={Influence Spread},
        legend columns = 5,
        legend entries = {\texttt{Random},\texttt{MaxDegree},
                          \texttt{Local},
                          \texttt{LSE},\texttt{MLE}},
        legend to name = legexp,
        cycle list name = {color}
      ]
      \plreal{#1}{real}{#2}{1}{re}{asterisk}{dotted};
      \plreal{#1}{real}{#2}{2}{re}{diamond}{dotted};
      \plreal{#1}{oim}{#2}{1.9.0.0.0}{re}{triangle}{dashed};
      \plreal{#1}{ols}{#2}{1.19.0.0.0}{re}{square}{solid};
      \plreal{#1}{mle}{#2}{1.19.0.0.0}{re}{*}{solid};
    \end{axis}
  \end{tikzpicture}
 \end{subfigure}
}

\newcommand{\plottime}[2]{
 \begin{subfigure}{0.3\linewidth}
    \centering
    \begin{tikzpicture}[scale=0.6]
    \tikzstyle{every node}=[font=\Large]
    \pgfsetplotmarksize{1.2pt}
    \begin{semilogyaxis}[%
        xmin=1, xmax=50,
        enlargelimits=0,
        axis on top,
        xlabel={trial -- \textsc{#1} k=#2 Beta(1, 9)},
        ylabel={time (min)},
        cycle list name = {color}
      ]
      \plreal{#1}{real}{#2}{1}{te}{asterisk}{dotted};
      \plreal{#1}{real}{#2}{2}{te}{diamond}{dotted};
      \plreal{#1}{oim}{#2}{1.9.0.0.0}{te}{triangle}{dashed};
      \plreal{#1}{ols}{#2}{1.19.0.0.0}{te}{square}{solid};
      \plreal{#1}{mle}{#2}{1.19.0.0.0}{te}{*}{solid};
    \end{semilogyaxis}
  \end{tikzpicture}
 \end{subfigure}
}

\newcommand{\plotprior}[2]{
  \begin{subfigure}{0.3\linewidth}
    \centering
    \begin{tikzpicture}[scale=0.6]
    \tikzstyle{every node}=[font=\Large]
    \pgfsetplotmarksize{1.2pt}
    \begin{axis}[%
        xmin=1, xmax=50,
        enlargelimits=0,
        axis on top,
        xlabel={Trial -- \textsc{#1} k=#2},
        ylabel={Influence Spread},
        legend columns = 4,
        legend entries = {\texttt{MLE}, \texttt{Beta(1, 999)}, \texttt{Beta(1, 99)}, \texttt{Beta(1, 9)}},
        legend to name = legprio,
        cycle list name = {color}
      ]
      \plreal{#1}{mle}{#2}{1.19.0.0.0}{re}{*}{solid};
      \plreal{#1}{oim}{#2}{1.999.0.0.0}{re}{|}{dashed};
      \plreal{#1}{oim}{#2}{1.99.0.0.0}{re}{star}{dashed};
      \plreal{#1}{oim}{#2}{1.9.0.0.0}{re}{triangle}{dashed};
    \end{axis}
  \end{tikzpicture}
 \end{subfigure}
}

\newcommand{\ploteps}[2]{
  \begin{subfigure}{0.3\linewidth}
    \centering
    \begin{tikzpicture}[scale=0.6]
    \tikzstyle{every node}=[font=\Large]
    \pgfsetplotmarksize{1.2pt}
    \begin{axis}[%
        xmin=1, xmax=50,
        enlargelimits=0,
        axis on top,
        xlabel={trial -- \textsc{#1} k=#2 $\epsilon$-greedy},
        ylabel={real spread (nodes)},
        legend columns = 3,
        legend entries = {\texttt{MLE}, \texttt{MLE+CB, $\epsilon=0.1$},
                          \texttt{MLE+EG}},
        legend to name = legeps,
        cycle list name = {color}
      ]
      \plreal{#1}{mle}{#2}{1.19.0.0.0}{re}{*}{dashed};
      \plreal{#1}{mle}{#2}{1.19.0.0.0.1}{re}{square}{dashed};
      \plreal{#1}{eg}{#2}{1.19.0}{re}{o}{solid};
    \end{axis}
  \end{tikzpicture}
 \end{subfigure}
}

\newcommand{\plotmse}[1]{
  \begin{subfigure}{0.4\linewidth}
    \centering
    \begin{tikzpicture}[scale=0.7]
    \tikzstyle{every node}=[font=\Large]
    \pgfsetplotmarksize{1.2pt}
    \begin{axis}[%
        ymin=0.06, ymax = 0.09,
        enlargelimits=0,
        axis on top,
        xlabel={trial -- \textsc{#1}},
        ylabel={mean sq. error},
        grid=both,
        legend columns = 3,
        legend entries = {$k=1$ prior, $k=1$ {\tt local}, $k=1$ {\tt MLE},
                          $k=5$ prior, $k=5$ {\tt local}, $k=5$ {\tt MLE},
                          $k=10$ prior, $k=10$ {\tt local}, $k=10$ {\tt MLE}},
        legend to name = legmse,
        cycle list name = {color}
      ]
      \plrealx{#1}{mse}{1}{0.0}{err}{o}{densely dotted};
      \plrealx{#1}{mse}{1}{1.0}{err}{o}{densely dashed};
      \plrealx{#1}{mse}{1}{1.2}{err}{o}{solid};
      \plrealx{#1}{mse}{5}{0.0}{err}{|}{densely dotted};
      \plrealx{#1}{mse}{5}{1.0}{err}{|}{densely dashed};
      \plrealx{#1}{mse}{5}{1.2}{err}{|}{solid};
      \plrealx{#1}{mse}{10}{0.0}{err}{star}{densely dotted};
      \plrealx{#1}{mse}{10}{1.0}{err}{star}{densely dashed};
      \plrealx{#1}{mse}{10}{1.2}{err}{star}{solid};
    \end{axis}
  \end{tikzpicture}
 \end{subfigure}
}


\section{Experimental Evaluation}\label{sec:experiments}

\begin{figure*}[t!]
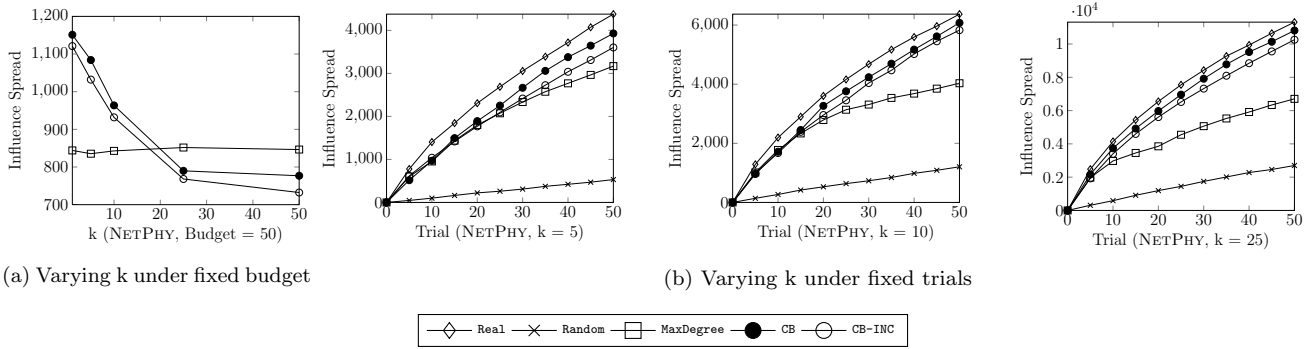

\mbox{
  \begin{subfigure}{0.23\linewidth}
    \centering
    \plotopt{NetPhy}{1}{50}
    \caption{Varying k under fixed budget} \label{fig:fixb}
 \end{subfigure}

  \begin{subfigure}{0.75\linewidth}
    \centering
    \plotKT{NetPhy}{5}{50}
    \quad
    \plotKT{NetPhy}{10}{50}
    \quad
    \plotKT{NetPhy}{25}{50}
    \caption{Varying k under fixed trials} \label{fig:fixk}
 \end{subfigure}
}
    \begin{center}
    {\tiny \ref{legkt}}
    \end{center}
    \vspace{-12pt}
   \caption{Heuristic-based v.s. Explore--Exploit.}
\end{figure*}

We now present the results. Section~\ref{subsec:setup} describes the experiment settings. In Sections~\ref{subsec:netphy} and \ref{subsec:allresult} we discuss our results on different datasets.


\subsection{Setup}
\label{subsec:setup}

We developed a ``real-world simulator'' to mimic the user feedback process of Figure~\ref{sf}.  This simulator first uses a
real social network to obtain a graph $G$.  It then associates an influence probability to each edge in $G$, where $p_{ij} = 1 / d_j$,
with $d_j$ the in-degree of node $j$. This setting of influence probability values is adopted in previous works~\cite{Kempe:2003, wei09, wei10, Goyal:2011, celf, TIM}.


When the chosen seed nodes are tested on whether they can influence other nodes, the simulator runs a single independent cascade simulation on $G$, and obtains feedback information $F_n$, in a form of $(i,j,a_{ij})$ and $A_n$, the set of successfully activated nodes. We measure the effectiveness of an OIM solution by its influence
spread in the real world, after $N$ trials, as the total number of
successfully activated nodes in these trials, i.e, $|\cup_{n=1}^{N}{A_n}|$. 
We repeat each solution $10$ times and report the average.

\begin{table}
  \centering
  \caption{Datasets\label{tab:datasets}}
  \begin{tabular}{lccc}
    \toprule
    Dataset & {\sc NetHEPT} & {\sc NetPHY} & {\sc DBLP}\\
    \midrule
    \texttt{\#} of Nodes & 15K & 37K & 655K \\
    \texttt{\#} of Edges & 59K & 231K & 2.1M \\
    avg{.} degree & 7.73 & 12.46 & 6.1 \\
    max{.} degree & 341 & 286 & 588 \\
    \bottomrule
  \end{tabular}
\end{table}

{\bf Datasets.} We have studied several real social network datasets. We have used {\sc NetHept} and {\sc NetPhy} are
collaboration networks, obtained from arXiv.org in the High Energy Physics Theory and Physics domains, respectively. We have also used the {\sc DBLP} graph, which is an academic collaboration network.
In these datasets, nodes represent authors, and edges
representing co-authorship.  These datasets are commonly used in the literature of influence maximization~\cite{Kempe:2003, wei09, wei10, Goyal:2011, TIM}. Table~\ref{tab:datasets} shows the details of these datasets.

{\bf Options for OIM algorithm.} We have evaluated several possible options for the {\it seed selection} and {\it graph update} components for our OIM solution:

{\bf[Choosing seeds]}
        \begin{compactitem}
            \item Heuristic-based strategies: \texttt{Random}, \texttt{MaxDegree};
            \item Explore--Exploit strategies:
                1) \texttt{Exploit} contains only exploit algorithm;
                2) \texttt{$\epsilon$-greedy} represents $\epsilon$-greedy algorithm;
                3) \texttt{CB} is our Confidence-Bound explore--exploit algorithm with Exponentiated Gradient update.
        \end{compactitem}

{\bf[Updating graph]}
        \begin{compactitem}
            \item \texttt{NO} does not conduct any update;
            \item \texttt{LOC} only local updates;
            \item \texttt{LSE} local and global updates where Least Squares Estimation is adopted in global update;
            \item \texttt{MLE} as {\tt LSE}, but Maximum Likelihood Estimation is adopted.
        \end{compactitem}

In our experiments, we compare the algorithms using combinations of the above
two components. Note that {\tt Random} and {\tt MaxDegree} do not rely on the
influence probability of the edges, and they are not combined with update
methods.  When a particular EE strategy is adopted, the update method would be
specified, for instance, \texttt{CB+MLE} means that we use \texttt{CB} with
\texttt{MLE} update.  By default, we use \texttt{MLE} for updating the graph.
Furthermore, if the EE strategy is used in choosing seeds, we use \texttt{CB} by
default.

When an IM algorithm is invoked in an EE strategy, we use TIM+ since it is the
state-of-art influence maximization algorithm.  We also compare the incremental
approach with the non-incremental one for EE strategy. For example, we denote
the incremental version for \texttt{CB} as \texttt{CB-INC}.

{\bf Parameters.}
By default, the global prior is set to be $\mathrm{B}(1,19)$,
$\boldsymbol{\theta}=\{-1,0,1\}$ in \texttt{CB}, $\epsilon = 0.1$ in
\texttt{$\epsilon$-greedy}, and $\tau=0.02$ in the incremental approach.

Our algorithms, implemented in C++, are conducted on a Linux machine with a 3.40\,GHz Octo-Core Intel(R) processor and 16GB of memory. Next, we focus on {\sc NetPHY}, and evaluate different
combinations of the algorithms in our OIM framework. We summarize our results for other datasets in Section~\ref{subsec:allresult}.

\subsection{Results on NetPHY}
\label{subsec:netphy}

{\bf Heuristic-based v.s. Explore--Exploit.}
We first fix the total budget and verify how the OIM algorithms perform with
different number of trials. We set $Budget=50$, and vary $k$ in $\{1, 5, 10, 25,
50\}$. By varying $k$, we essentially vary the total budget. For example, with
$k=5$, $50$ units of budget is invested over $N=10$ trials.
Figure~\ref{fig:fixb} shows our results. Since \texttt{Random} only has
influence spread less than $200$ on average, we do not plot it. We observe that
the spread of \texttt{MaxDegree} does not change much since it does not depend
on the real-world feedback. For \texttt{CB}, its spread increases when $k$
decreases and it is better than \texttt{MaxDegree} when $k \leq 10$ (or $N \ge
5$). Specifically, when $k=1$, \texttt{CB} is about $35\%$ better than
\texttt{MaxDegree}.  The reason is that, for \texttt{CB}, a smaller $k$
indicates more chances to get real-world feedback, and thus, more chances to
learn the real influence graph, which leads to a better result.  Moreover, when
$k=50$, all budget is invested once, which can be regarded as an offline
solution, and produces the worst result for \texttt{CB}. This further indicates
the effectiveness of our OIM framework.  For \texttt{CB-INC}, it performs close
to \texttt{CB} with only a small discount (around $5\%$ for different $k$) on
the spread. It supports our claim that the incremental approach can perform
without incurring much error.

We next fix $k$ and compare different algorithms in Figure~\ref{fig:fixk}.  The
results are consistent with our previous findings that \texttt{CB} outperforms
other variants. \texttt{CB-INC} produces similar results with \texttt{CB}. We
observe that the gap between \texttt{CB} and \texttt{MaxDegree} increases with
$N$ and $k$. For example, at $N=50$, \texttt{CB} is about $20\%$ better than
\texttt{MaxDegree} when $k=5$, and the percentage grows to $45\%$ when $k=25$.
The reason is that larger $k$ and larger $N$ give more chances for \texttt{CB}
to learn the real influence graph.  We also plot the result for TIM+ when the
real influence probability is known, denoted as {\tt Real}. This can be seen as
an oracle, serving as a reference for other algorithms. We find that {\tt CB}
performs close to {\tt Real}, and its discount on the spread decreases with $N$.
For example, when $k=5$, the discount decreases from $30\%$ at $N=10$ to $13\%$
at $N=50$. This indicates that, with more real-world feedback, the learned graph
for {\tt CB} is closer to the real graph, and thus, leads to a closer result to
{\tt Real}.

\begin{figure} [t!]
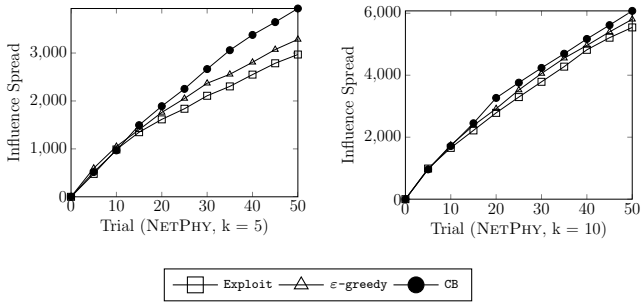

     \centering

    \plotee{NetPhy}{5}{50}
    \quad
    \plotee{NetPhy}{10}{50}

    \begin{center}
    {\tiny \ref{legee}}
    \end{center}
    \vspace{-12pt}
    \caption{Explore--exploit strategies\label{fig:eps}}
\end{figure}

\textbf{ Explore--Exploit Strategies.}
We compare three versions of the EE strategies for different $k$ in
Figure~\ref{fig:eps}.  We observe that {\tt Exploit} is the worst, since it may
suffer from the wrong prediction of the influence probabilities and does not
explore other potential high influencing nodes. {\tt CB} is the best,
especially, for small $k$.  When $k=5, N=50$, {\tt CB} is about $20\%$ and
$32\%$ better than {\tt $\epsilon$-greedy} and {\tt Exploit}, respectively. The
reason is that for a smaller $k$, fewer feedback tuples are returned in one
trial, which makes the learned influence graph converge to the real graph
slower. Hence, the effect of exploration is strengthened, which is more
favorable to {\tt CB}. We have also conducted experiments for $\epsilon$-greedy
by varying $\epsilon$. We observe that its performance is sensitive to
$\epsilon$ and $\epsilon=0.1$ is the best one in our results, but it is still
worse than {\tt CB} in all cases.

\begin{figure}
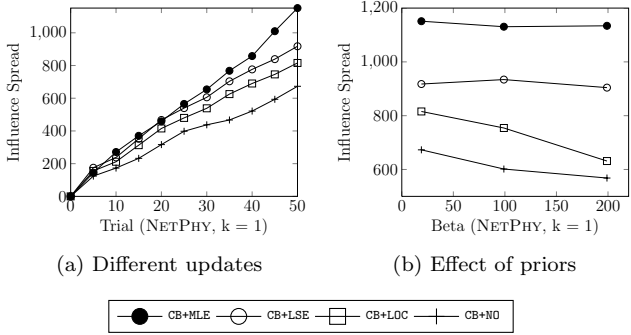

\mbox{
  \begin{subfigure}[b]{0.5\linewidth}
    \centering
    \plotupdates{NetPhy}{1}{50}

    \caption{Different updates} \label{fig:up}
 \end{subfigure}

  \begin{subfigure}[b]{0.5\linewidth}
    \centering
     \plotpriors{NetPhy}{1}

    \caption{Effect of priors} \label{fig:priors}
 \end{subfigure}
}
    \begin{center}
    {\tiny \ref{legupdates}}
    \end{center}
\vspace{-12pt}
   \caption{Comparing different updating methods}\label{fig:ups}
\end{figure}

\textbf{Updating the uncertain influence graph.}
In Figure~\ref{fig:up}, we compare different updating methods for the uncertain
influence graph.  Although {\tt NO} makes use of the prior knowledge about the
influence graph to select seeds, it still performs worse than other update
options.  {\tt LOC} is slightly better, but still worse than {\tt MLE} and {\tt
LSE}, since it does not employ any global update and it suffers from the
sparseness of the activations.  {\tt MLE} is the best (about $25\%$ better than
{\tt LSE} and $40\%$ better than {\tt LOC}), which is consistent with the fact
that {\tt MLE} makes use of the full feedback to update the graph while {\tt
LSE} only utilizes the set of successfully activated nodes.

We also test the updating methods with different priors
(Figure~\ref{fig:priors}) to check whether they are sensitive to the prior. We
observe that while {\tt LOC} and {\tt NO} fluctuate a lot with different priors,
{\tt MLE} and {\tt LSE}'s performance is very stable. In fact, during different
runs of {\tt MLE} and {\tt LSE} with different priors, the global $\beta$
values all converge to around $27$. This supports the fact that the global
updating techniques are crucial when we do not have good prior information. Even
an inexact choice of prior will be generally fixed, minimizing the impact on
performance.

\textbf{Efficiency.}
In Figure~\ref{fig:running}, we illustrate the cumulative running time for
running $N$ trials for different algorithms. {\tt Random} and {\tt MaxDegree}
are most efficient as they do not rely on any influence evaluation. With the
help of incremental approach, {\tt CB-INC} runs significantly faster than {\tt
CB}, and for the case where $N\geq 10$, it achieves about 10 times speedup. For
instance, at $N=50$, {\tt CB-INC} reduces the running time by $88\%$, compared
to {\tt CB}. This is intuitive, as in the first few trials the graph is more
uncertain, and the updates affect the samples a lot. However, when $N\geq10$, we
observe that the global priors become more stable, leading to a high ratio of
re-using samples (e.g., the ratio is about $80\%$ to $99\%$ when $N\geq 10$).
Moreover, the average in-degree of {\sc NetPHY} is 12.46, making the time of
generating a new sample about an order of magnitude slower than re-using a
sample. These two factors together make {\tt CB-INC} have a much more efficient
performance than {\tt CB}.

We then show the efficiency results by fixing $Budget=50$ and varying $k$ in
Figure~\ref{fig:runningK}. The running time of {\tt MaxDegree} and {\tt Random}
is stable for various $k$, while {\tt CB} and {\tt CB-INC} show a decline on
efficiency when $k$ decreases. This is because a smaller $k$ indicates that more
trials are required to invest all budget, and so, TIM+ should be executed more
often, for a general decrease in efficiency. Another observation is that the
improvement of {\tt CB-INC} over {\tt CB} increases with $k$. This further
strengthens the utility of using {\tt CB-INC} in practice.
Figure~\ref{fig:runningK} and Figure~\ref{fig:fixb} together show a tradeoff of
setting $k$: a smaller $k$ leads to a better performance in spread but worse
performance in efficiency. We suggest to set a small $k$ to ensure the
algorithm's better performance in spread. The value of $k$ will depend on how
much total time that the user can afford.

\begin{figure}[t!]
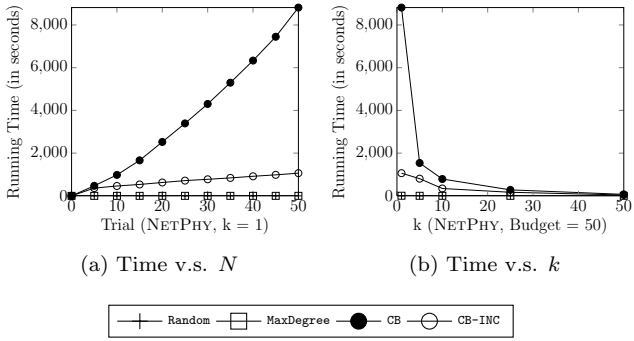


\mbox{
  \begin{subfigure}{0.5\linewidth}
    \centering
    \plott{NetPhy}{1}{50}
    \caption{Time v.s. $N$}
    \label{fig:running}
 \end{subfigure}

  \begin{subfigure}{0.5\linewidth}
    \centering
    \plottt{NetPhy}
    \caption{Time v.s. $k$}
    \label{fig:runningK}
  \end{subfigure}
}

    \begin{center}
    {\tiny \ref{legtt}}
    \end{center}
    \vspace{-12pt}
    \caption{Cumulative running time}
\end{figure}

\textbf{Effect of $\tau$.}
We also verify the effect of $\tau$ in the incremental approach by varying
$\tau$ from $0.01$ to $0.03$ and fixing $k=1, Budget=50$. We compare them with
{\tt CB}, the non-incremental algorithm. First, a smaller $\tau$ gives better
results in terms of influence spread. For instance, it leads to 3\%, 5\%, 15\%
discount in spread compared with {\tt CB} for $\tau = 0.01, 0.02, 0.03$,
respectively. However, a smaller $\tau$ leads to a slowdown in efficiency since
it has a stricter requirement in global check. For example, the running time for
$\tau=0.01$ is about $28\%$ slower than the one for $\tau=0.02$ and $38\%$ worse
than the one for $\tau=0.03$.

\textbf{Discussion.}
The OIM framework is highly effective in maximizing influence when the real
influence probabilities are unknown.  In this framework, {\tt MLE} is the best
updating method. Moreover, {\tt CB} and {\tt CB-INC} consistently outperform
other algorithms. By using {\tt CB-INC}, we can also significantly improve the
efficiency of {\tt CB}, with only a small discount in influence spread.

\subsection{Results for NetHEPT and DBLP}
\label{subsec:allresult}

\begin{figure}[t]
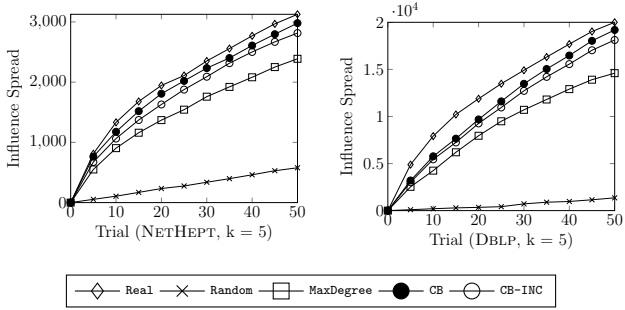

\mbox{
  \begin{subfigure}{0.5\linewidth}
    \centering
    \plotKTsmall{NetHept}{5}{50}
 \end{subfigure}

  \begin{subfigure}{0.5\linewidth}
    \centering
    \plotKTsmall{Dblp}{5}{50}
  \end{subfigure}
}
    \begin{center}
    {\tiny \ref{legkt}}
    \end{center}
   \caption{Effectiveness on other datasets}
   \label{fig:effectall}
\end{figure}

Figure~\ref{fig:effectall} and Figure~\ref{fig:effiall} show representative
results for {\sc NetHEPT} and {\sc DBLP}. These results are consistent with the
ones for {\sc NetPHY}, where {\tt CB} and {\tt CB-INC} are close to the oracle
({\tt Real}), and better than heuristic-based algorithms in maximizing influence
spread. For efficiency, {\tt CB-INC} significantly reduces the running time of
{\tt CB}, especially for a large dataset {\sc DBLP}. For instance, at
$k=1,N=50$, {\tt CB-INC} saves 16 hours compared with {\tt CB} which costs 19
hours in total to get the result for {\sc DBLP}.


\section{Conclusions}\label{sec:conc}

In this paper, we examine how to perform influence maximization when influence
probabilities may not be known in advance. We develop a new solution, where IM
is performed in multiple trials, and we have proposed explore--exploit strategies for this problem. We showed
experimentally that explore--exploit based on the uncertainty in the graph performs
well. We also proposed novel methods to update
the knowledge of the graph based on the feedback received from the real world, and showed experimentally that they are effective in longer campaigns.  Even when the influence probabilities are not known in advance, the influence spread of our solution is close to the spread using the real influence graph, especially when the number of trials increases.

In the future, we will examine the scenario where budgets are different in each
trial. We will extend our solution to handle other complex situations (e.g., the
change of influence probability values over time), consider IM methods
(e.g.,~~\cite{cbtopkin}, \cite{INFLEX}) that utilize community and topic
information, and other influence propagation models, such as linear threshold or
credit distribution~\cite{Goyal:2011, Singer:2012, siandposteriore}. Another
direction is to increase the scalability of our methods; this may require
distributed algorithm, such as distributed sampling.

\begin{figure}[t]
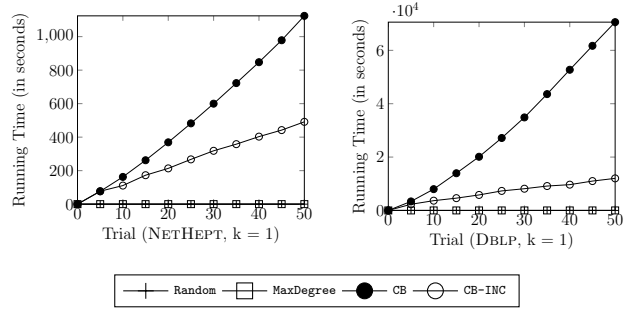

\mbox{
  \begin{subfigure}{0.5\linewidth}
    \centering
    \plottsmall{NetHept}{1}{50}
 \end{subfigure}

  \begin{subfigure}{0.5\linewidth}
    \centering
    \plottsmall{Dblp}{1}{50}
  \end{subfigure}
}
    \begin{center}
    {\tiny \ref{legtt}}
    \end{center}
   \caption{Efficiency on other datasets}
   \label{fig:effiall}
\end{figure}

\bibliography{oim}  
\bibliographystyle{abbrv}

\end{document}